\documentclass[english]{article}

\usepackage{custom_tex}

\author{Sifan Liu\thanks{Correspondence email: \texttt{sfliu@stanford.edu}}}
\author{Jelena Markovic-Voronov}
\author{Jonathan Taylor}
\affil{Department of Statistics, Stanford University}
\date{August 2023}
\title{Black-box Selective Inference via Bootstrapping}

\usepackage{amsmath}
\usepackage{graphicx}
\usepackage{enumerate}
\usepackage{url} 

\newcommand{\rd}{\mathrm{d}}
\newcommand{\whH}{\widehat{H}}
\newcommand{\whM}{\widehat{M}}
\newcommand{\whU}{\widehat{U}}
\newcommand{\whV}{\widehat{V}}
\newcommand{\whW}{\widehat{W}}
\newcommand{\whZ}{\widehat{Z}}
\newcommand{\bzero}{\mathbf{0}}

\newtheorem{assumption}[theorem]{Assumption}

\begin{document}
\maketitle

\begin{abstract}
Conditional selective inference requires an exact characterization of the selection event, which is often unavailable except for a few examples like the lasso. This work addresses this challenge by introducing a generic approach to estimate the selection event, facilitating feasible inference conditioned on the selection event. The method proceeds by repeatedly generating bootstrap data and running the selection algorithm on the new datasets. Using the outputs of the selection algorithm, we can estimate the selection probability as a function of certain summary statistics. This leads to an estimate of the distribution of the data conditioned on the selection event, which forms the basis for conditional selective inference. We provide a theoretical guarantee assuming both asymptotic normality of relevant statistics and accurate estimation of the selection probability. The applicability of the proposed method is demonstrated through a variety of problems that lack exact characterizations of selection, where conditional selective inference was previously infeasible.
\end{abstract}

\section{Introduction}
\label{sec:intro}

Inference after selection is susceptible to bias when the same information is employed both for the selection process and the subsequent inference. This double use of data can lead to skewed outcomes. To conduct valid inference, the information that has been used for selection must be discarded. The conditional selective inference framework addresses this issue by conditioning on the selection event and conducting inference based on the conditional distribution of the data.

In certain cases, the conditional distribution given the selection event is tractable. For instance, in the context of the lasso \citep{tibshirani1996regression} selection, \citet{lee2016exact} have derived the exact distribution of the data conditioned on the signs of the lasso solution under Gaussian noise. This methodology is further extended to the square-root lasso \citep{tian2018squareroot}, forward stepwise regression, and least-angle regression \citep{taylor2014post}. To increase inferential power, randomized versions of the lasso have been proposed, including using a subset for selection and adding noise to the data \citep{tian2018selective}. In the case of the randomized lasso, specific algorithms have been developed to facilitate valid and efficient inference \citep{panigrahi2022approximate,panigrahi2022exact}. Nonetheless, these methods are tailored exclusively to selection algorithms like the lasso and its variants. For more complex selection procedures, the distribution of the data conditioned on the selection event often lacks an exact characterization, thereby limiting the feasibility of this conditional approach.

The objective of this work is to address this challenge by introducing a generic approach to estimate the conditional distribution and thus enable feasible inference even in scenarios where exact characterization of the conditional distribution is elusive. To provide an initial glimpse into the approach, suppose the selection event is denoted as $\{\whM=M \}$, and the aim is to conduct inference for the parameter $\theta$ based on the conditional distribution of the test statistic $\hat\theta$ given the selection. We start by assuming that there exists a statistic $\whU$ that is independent of $\hat\theta$ and, importantly, the selection process depends on the data only through $(\hat\theta,\whU)$. This allows the factorization of the conditional density of $\hat\theta\mid\{\whM=M, \whU=U \}$ as follows:
\begin{align*}
p_{\hat\theta}(x\mid \whM=M, \whU=U;\theta )\propto p_{\hat\theta}(x;\theta)\cdot \PP{\whM=M\mid \hat\theta=x, \whU=U }.
\end{align*}
The two terms on the right-hand side are the pre-selection density of $\hat\theta$ and the probability of selecting the model $M$ given $\hat\theta$ and $\whU$. 
Because $(\hat\theta,\whU)$ is assumed to be sufficient for the model selection, the probability $\PP{\whM=M\mid \hat\theta=x, \whU=U }$ does not depend on the data and is solely a property of the selection algorithm itself. The core idea is to acquire knowledge about the selection probability by repeatedly executing the selection algorithm on newly created data.

Specifically, we repeatedly generate new datasets by bootstrapping and run the selection algorithm on the newly generated datasets. During this process, we keep track of the labels that indicate whether the selected model is the same as $M$, along with the summary statistics that are assumed to be sufficient for the selection process. These binary labels and the summary statistics are then employed to estimate the selection probability by minimizing the cross-entropy loss within a function class like a neural network. As long as the function class is representative enough and we generate a sufficiently large dataset that contains both types of labels, this approach is expected to yield a reliable estimate of the selection probability.

The contributions and structure of this paper are summarized as follows. In Section~\ref{sec: problem}, we provide some background on selective inference and introduce the proposed method. We also discuss the possibility to condition less by marginalizing over some ancillary statistic. We demonstrate the ideas using a running example of the drop-the-loser design to help understanding. In Section~\ref{sec: learning}, we delve into the implementation details. We explain how to generate bootstrap data, estimate the selection probability, and conduct inference once the selection probability is obtained. Additionally, we present a practical method for assessing the accuracy of the estimated conditional distribution. Section~\ref{sec: theory} establishes the asymptotic coverage guarantee contingent upon the selection probability being estimated sufficiently accurate and all summary statistics satisfying asymptotic normality under the pre-selection and bootstrapping distributions. In Section~\ref{sec: simu}, we apply the proposed method to a variety of problems. First, we compare the proposed method within the lasso problem against prior, more specialized methods. Then we consider scenarios where conditional selective inference was previously infeasible. This encompasses the tasks of conducting inference for parameters selected by some screening procedures, including the Benjamini-Hochberg procedure and the knockoff filter. Furthermore, we apply our method to a sequential testing setting where repeated tests are performed until achieving significance. Overall, our method is shown to yield valid statistical inference for these tasks, thus offering a solution to conduct conditional selective inference for a much broader spectrum of problems.

\section{Problem formulation}
\label{sec: problem}

Consider scenarios where data analysis is carried out, revealing underlying patterns within the data, generating initial hypotheses and conjectures, and influencing subsequent research and experimental decisions. Several illustrative examples come to mind:
\begin{itemize}
  \item \textbf{Initial screening of significant features or hypotheses:} Imagine situations like variable selection in regression or multiple hypotheses testing. Such procedures yield a model, which in this context represents a subset of potentially important variables or significant hypotheses. Following the initial screening, the next step is to conduct inference for the selected variables or hypotheses.
  \item \textbf{Selective reporting:} Consider multiple research laboratories independently conducting hypothesis testing on separate datasets. However, they only share their outcomes if certain hypothesis test is statistically significant. This phenomenon often results in ``publication bias". More generally, similar types of bias can emerge due to selective reporting. In this situation, the model is the indicator of whether the result is reported and the objective is to conduct valid inference with accessibility only to the reported data.
  \item \textbf{Adaptive clinical trial and sequential decision making:} Adaptive trial designs permit modifications to the trials based on the data observed thus far. In a two-stage design, for instance, findings from the first stage can guide decisions for the second stage. In particular, a subset of treatments or a subpopulation might be selected to continue into the second stage based on the preliminary findings. Here, the model is the decision made based on historical data, such as the selected subset of treatments or subpopulation. The objective is to leverage all collected data for inference regarding parameters of interest, such as the treatment effects. A similar example is known as repeated significance testing, where a hypothesis is tested repeatedly while accumulating new data until achieving significance. 
\end{itemize}
Across all these scenarios, the model $M$ depends on the data. Neglecting this dependence inadvertently introduces bias into inference. A common remedy for this bias is to condition on the selection event. By doing so, the inference does not use the information that has been used for selection thus avoids the bias. In the framework of conditional inference, it is crucial to have a characterization of the selection event. However, situations where the selection has a closed form are very rare. 

In this paper, we propose a method to estimate the selection event that is applicable to a much broader range of problems. In the below, let $D$ denote the dataset which follows an unknown distribution $\bbF$. Let $M$ denote the selected model. Let $\theta=\theta(\bbF,M)\in\R^s$ denote the parameter of interest and let $\hat\theta=\hat\theta(D,M)\in\R^s$ be the statistic whose distribution will be used to conduct inference for $\theta$.

\subsection{The proposed method}
Central to our approach is the assumption that there exists certain $d$-dimensional summary statistic $\whZ=\whZ(D)\in\R^d$ that is sufficient for the selection algorithm. In other words, we assume the model $\whM$ depends on the data $D$ solely through $\whZ$. If the selection algorithm is random, such as the lasso with a random response \citep{tian2018selective}, then $\whM$ also depends on some external noise denoted as $\omega$. We represent the selection procedure as the function $\calS$ such that $\whM=\calS(\whZ,\omega)$. We refer to $\whZ$ as the \emph{basis}.

Moreover, the basis $\whZ$ is assumed to be approximately jointly Gaussian with $\hat\theta$. In this section, we assume they follow the exact normal distribution
\begin{align}
\begin{pmatrix}
\hat\theta \\ \whZ
\end{pmatrix} \sim \N_{s+d}\left(\begin{pmatrix} \theta \\ \mu_Z \end{pmatrix},
\begin{pmatrix}
\Sigma & C\tran \\ C & \Sigma_Z
\end{pmatrix}
\right).
\label{equ: theta Z joint normal}
\end{align}
Consequently, we can decompose $\whZ$ as
\[
\whZ=\Gamma \hat\theta + \whU,\quad \text{where }\Gamma=C\Sigma^{-1}.
\]
This ensures that $\hat\theta$ is independent of $\whU$, denoted as $\hat\theta\indep\whU$, under the pre-selection distribution $\bbF$.
As a result, the density of $\hat\theta$ conditional on $\{\widehat M=M, \widehat U=U \}$ is proportional to
\begin{align*}
  p_{\hat\theta}(x \mid \widehat M=M, \widehat U=U; \theta )\propto  \varphi(x;\theta,\Sigma) \cdot \PP{\widehat M=M\mid \widehat U=U, \hat\theta=x},
\end{align*}
where $\varphi(x;\theta,\Sigma)$ is the probability density function (pdf) of the normal distribution $\N(\theta,\Sigma)$.
We define the selection probability function $\pi(Z)=\PP{\calS(\whZ, \omega )=M\mid \whZ=Z }$. Because $\whZ=\Gamma \hat\theta + \whU$, along with $\omega$, completely determines the selection procedure, it follows that $\PP{\widehat M=M\mid \widehat U=U, \hat\theta=x}=\pi(\Gamma x+U)$. Hence, the conditional density of $\hat\theta\mid \{\widehat M=M, \widehat U=U \}$ can be expressed as
\begin{align*}
  p_{\hat\theta}(x\mid \widehat M=M, \widehat U=U; \theta )\propto  \varphi(x;\theta,\Sigma) \cdot \pi(\Gamma x+U)  .
\end{align*}
While certain selection algorithms, like the lasso, possess a closed-form selection probability $\pi(\cdot)$, more complicated selection procedures lack a direct expression. Therefore, we propose to acquire knowledge about $\pi(\cdot)$ by executing the selection algorithm repeatedly. Before we introduce the method, we first present some intuitions using an illustrative example.

\subsection{A first example --- drop-the-losers design}
\label{sec: dtl}
Adaptive designs are frequently employed to expedite clinical trial durations. We consider the two-stage drop-the-losers (DTL) design, where the superior treatment from the first stage continues to the second stage \citep{sampson2005drop}. Suppose there are $K$ distinct treatments. In the first stage, we observe $X_{k,j}\iid\N(\theta_k,1)$ independently for $1\leq k\leq K$ and $1\leq j\leq n_1$, and compute the mean responses $\widebar {X}_k=\frac{1}{n_1}\sum_{j=1}^{n_1} X_{k,j} $. Subsequently, we select treatment $k^*=\argmax_{k\in[K]}\widebar X_k $, the one with the highest mean effect according to the first stage experiment, and administer $n_2$ additional subjects for treatment $k^*$, denoted by $Y_{k^*,j}\iid\N(\theta_{k^*},1)$ ($1\leq j\leq n_2$). The objective is to test $H_0:\theta_{k^*}=0 $ or construct a confidence interval for $\theta_{k^*}$ with the two stages of data. The conventional z-test employing the statistic $\hat\theta:=\frac{n_1\widebar X_{k^*} + n_2 \widebar Y  }{n_1+n_2} $ is biased because the true distribution of $\hat\theta$ is stochastically greater than $\N(\theta_{k^*},\frac{1}{n_1+n_2} )$ due to the selection in the first stage. A valid approach is to only use the second-stage data for inference, relying on the distribution of $\widebar Y_{k^*}\sim\N(\theta_{k^*},\frac{1}{n_2})$. However, this ``sample splitting" approach does not use any data from the first stage for inference. A more efficient strategy is to employ both stages of data while conditioning on the the selection of treatment $k^*$.

In this example, we know the selection is based on the summary statistic $\whZ=(\widebar X_{1},\ldots,\widebar X_K)$. Moreover, we know $(\hat\theta,\whZ)$ follows the normal distribution in Equation~\eqref{equ: theta Z joint normal} with
\begin{align*}
\mu_Z=(\theta_1,\ldots,\theta_K )\tran,\; \Sigma=\frac{1}{n_1+n_2},\; C=\frac{1}{n_1+n_2} e_{k^*}, \; \Sigma_Z=\frac{1}{n_1} I_K, 
\end{align*}
where $e_{k^*}$ is the $k^*$-th standard basis vector in $\R^K$. Since the parameter $\theta_{k^*}$ is of $s=1$ dimension, we also denote $\sigma^2=\Sigma=\frac{1}{n_1+n_2}$.
In this case, the decomposition of $\whZ$ can be expressed as $\whZ=e_{k^*} \hat\theta + \whU$, where
\[
\whU_j=\begin{cases}
\widebar X_j - \hat\theta &\text{ if }j=k^*,\\
\widebar X_j & \text{ otherwise }.
\end{cases}
\]
Because the selection event is $\left\{k^*=\argmax_{k\in[K]} Z_k \right\}$, we have {\sloppy$\pi(Z)=\Indc{Z_{k^*}\geq Z_j,\, \forall\, j\neq k^*}$}. Therefore, $\pi(\Gamma x + U)=\Indc{x + U_{k^*}\geq U_{j},\, \forall j\neq k^* } $.
Let $a=\max_{j\neq k^*} U_j$ and $b=U_{k^*}$. Then the conditional density of $\hat\theta\mid \whM=M, \whU=U $ can be expressed as
\begin{align}
\label{equ: dtl likelihood}
p_{\hat\theta}(x\mid \whM=M, \whU=U; \theta_{k^*} )\propto \varphi(x; \theta_{k^*},\sigma^2)  \cdot \Indc{x\geq a-b},\,\text{ where }\, a-b=\max_{j\neq k^*}U_j - U_{k^*}.
\end{align}
This indicates that the conditional distribution is the normal distribution $\N(\theta_{k^*}, \sigma^2)$ truncated to the interval $[a-b,\infty)$. We denote this truncated normal distribution as
\begin{align}
\label{equ: dtl trunc normal}
\text{TN}(\theta_{k^*}, \sigma^2; a-b, +\infty ).
\end{align}
To test $H_0:\theta_{k^*}=0$ versus the one-sided alternative $H_1:\theta_{k^*}>0$, the p-value can be computed as the tail probability
\[
1 - \frac{\Phi(\frac{\hat\theta}{\sigma})- \Phi(\frac{a-b}{\sigma})  }{1 - \Phi(\frac{a-b}{\sigma}) },
\]
where $\Phi$ is the cumulative distribution function (CDF) of the standard normal distribution.
Similarly, a level-($1-\alpha$) confidence interval for $\theta_{k^*}$ can be constructed by inverting the test:
\begin{align*}
  \text{CI}(\theta_{k^*})=\left\{\theta: 1 - \frac{\Phi(\frac{\hat\theta-\theta}{\sigma})- \Phi(\frac{a-b-\theta}{\sigma})  }{1 - \Phi(\frac{a-b-\theta}{\sigma}) }\geq \alpha  \right\}.
\end{align*}

\subsection{Condition less}
\label{sec: condition less}

Readers might have noted that our approach conditions on $\{\whM=M, \whU=U \}$, which encompasses more information than the necessary conditioning event $\{\whM=M \}$. Conditioning more indicates that we retain less information for inference, potentially resulting in lower power. In this section, we discuss how to condition less by marginalizing over some ancillary statistic.

Suppose it is known that the statistic $\whV\in\R^d$ is independent of $\hat\theta$ and follows the normal distribution $\whV\sim\N(0,\Sigma_V)$, then we can decompose $\whU$ into a part that is independent of $\whV$ and a part that is dependent on $\whV$. Without loss of generality, we write this decomposition as
\[
\whU = \whV + \whW,\quad \whV\indep \whW.
\]
Otherwise, we redefine $\whV$ as $\Cov{\whU, \whV} \Var{\whV}^{-1}\whV $.
The conditional density of $\hat\theta\mid \{\whM=M,\whW=W \}$ is then proportional to
\begin{align*}
  p_{\hat\theta}(x\mid \whM=M, \whW=W; \theta)&\propto \varphi(x; \theta, \Sigma) \cdot \PP{\whM=M\mid \hat\theta=x, \whW=W }\\
  &\propto \varphi(x; \theta, \Sigma) \cdot \int  \PP{\whM=M\mid \hat\theta=x, \whW=W, \whV=v } \varphi(v;0,\Sigma_V)\rd v \\
  &\propto \varphi(x; \theta, \Sigma) \cdot \int \pi(\Gamma x + v + W )\varphi(v;0,\Sigma_V)\rd v,
\end{align*}
where the last equality is due to $\whZ=\Gamma \hat\theta+\whV+\whW$.
Let $\tZ=\Gamma\hat\theta + \whW$ so that $\tZ=\whZ-\whV$ and $\tZ\indep \whV$. 
Define
\begin{align}\label{equ: define pi tilde}
\tilde\pi(\tZ):= \int \pi(\tZ + v )\varphi(v;0,\Sigma_V)\rd v= \PP{\calS(\tZ+\whV, \omega )=M \mid \tZ }.
\end{align}
Hence, the conditional density of $\hat\theta\mid \{\whM=M, \whW = W \} $ can be expressed as
\begin{align}
  \label{equ: cond density tilde pi}
  p_{\hat\theta}(x \mid \whM=M, \whW = W; \theta ) \propto \varphi(x; \theta,\Sigma)\cdot \tilde{\pi}(\Gamma x + W ).
\end{align}
Therefore, when conditioned on $\{\whM=M, \whW=W \}$, we need to estimate the selection probability $\tilde \pi$ as a function of $\tZ$. If $\whV$ is not a constant (i.e. 0), the event $\{\whM=M, \whW = W \}$ contains strictly less information than $\{\whM=M, \whU = U \}$, potentially preserving more information for the inference stage. In the rest of the paper, our focus is on estimating the conditional density in Equation~\eqref{equ: cond density tilde pi}, since $\whV$ can be trivially defined to be the constant 0.

\paragraph*{Revisit the DTL example}

In Section~\ref{sec: dtl}, we have described our procedure for the drop-the-loser (DTL) problem. Notice that
\[
\whU_{k^*}=\widebar X_{k^*} - \hat\theta \sim\N(0, s^2 ),
\]
where $s^2=\frac{1}{n_1}-\frac{1}{n_1+n_2}$. We let $\whV=e_{k^*}\whU_{k^*}$ and $\whW=\whU-\whV$. This choice of $\whV$ ensures that $\whV\indep \whW$. We also have $\tZ=\whZ-\whV$ with $\tZ_{k^*}=\hat\theta$ and $\tZ_j=\widebar X_j$ for $j\neq k^*$. As a result,
\begin{align*}
  \tilde\pi(\tZ)&=\int \Indc{ \hat\theta + v\geq a }\varphi(v;0,s^2)\rd v=\Phi(\frac{\hat\theta-a}{s} ),
\end{align*}
where $a=\max_{j\neq k^*}W_j=\max_{j\neq k^*}U_j$ as before. Compared to $\pi(Z)=\Indc{Z_{k^*}\geq a-b}$, which involves the hard truncation, the function $\tilde\pi$ is smooth due to the marginalization over the variable $\whU_{k^*}$. When conditioning on $\{\whM=M, \whZ=Z \}$, we condition on all the first-stage group means $\widebar X_j$ ($1\leq j\leq K$). In contrast, when conditioning on $\{\whM=M, \whW=W\}$, we only condition on the group means $\widebar X_j$ for all $j\neq k^*$, along with the event of selecting $k^*$ in the first stage.

With the expression of $\tilde\pi$, the conditional density of $\hat\theta\mid \{\whM=M,\whW=W\}$ is given by
\begin{align}
\label{equ: dtl density marginalize}
p_{\hat\theta}(x;\whM=M, \whW=W; \theta_{k^*})= \frac{\varphi(x;\theta_{k^*},\sigma^2) \Phi(\frac{x-a}{s})} {1 - \Phi(\frac{a-\theta_{k^*}} {\sqrt{\sigma^2+s^2}}) },
\end{align}
where $\sigma^2=\frac{1}{n_1+n_2}$. 

~\citet{kivaranovic2020tight} studied the length of the confidence intervals based on the conditional density~\eqref{equ: dtl density marginalize}. According to~\citet[Theorem 1]{kivaranovic2020tight}, the expected length of the one-sided level-$(1-\alpha)$ confidence interval is smaller than $\frac{\sigma\sqrt{\sigma^2+s^2} }{s}\Phi^{-1}(1-\alpha)=\frac{1}{\sqrt{n_2}} \Phi^{-1}(1-\alpha)$, implying that the confidence interval is on average shorter than the interval obtained by only using the second stage of data. Moreover, the confidence interval based on the hard truncated normal distribution~\eqref{equ: dtl trunc normal} has infinite expected length~\citep{kivaranovic2021length}, suggesting that marginalization (when feasible) leads to more powerful inference.

\section{Estimating the selection probability}
\label{sec: learning}
As discussed earlier, a key aspect of our method is to estimate the selection probability $\tilde\pi(\tZ)$ defined in Equation~\eqref{equ: define pi tilde} as a function of $\tZ$. This section outlines the estimation process. The primary assumption is the ability to repeatedly generate new datasets $D^*$ and run the selection algorithm on them, often facilitated by computational tools. Consequently, subjective model selection made by researchers falls beyond the scope of the proposed method.

\subsection{Generate training data}
\label{sec: generate training}
To initiate the process, we propose to generate datasets $D_1^*,\ldots,D_B^*$ through bootstrapping from the original dataset $D$. We denote the bootstrap distribution as $\bbF^*$ conditional on the dataset $D$. For each $D_i^*$, the selection algorithm is executed to obtain the selected model $M_i^*$. We record the labels $\ell_i^*=\Indc{M_i^*=M}$ and compute the basis $\tZ_i^*=\tZ(D_i^*)$. The choice of an appropriate basis $\tZ$ depends on the specific problem, and we will delve into the specifics of making these choices for the examples in Section~\ref{sec: simu}.

The set of data points $\{(\tZ_i^*,\ell_i^*)\}_{1\leq i\leq B}$ constitutes our training data. Based on the generating process, they follow the distribution
\[
\ell_i^*\sim \text{Bernoulli}(\tilde\pi^*(\tZ_i^* ) ),
\]
where $\tilde\pi^*(\tZ^*):= \bbF^*\{\whM^*=M\mid \tZ^* \}$ is the selection probability given $\tZ^*$ under the bootstrap distribution $\bbF^*$. So from the training data, what we can hope to estimate is, in fact, the function $\tilde\pi^*$, not $\tilde\pi$. However, the two functions can be closely related under bootstrap consistency.

To see this, assume that under the bootstrap distribution, suitably scaled, we have the approximate normal distribution
\begin{align*}
  \begin{pmatrix} \tZ^*-\tZ \\ \whV^* - \whV \end{pmatrix} \dot{\sim}\, \N\left(\begin{pmatrix}\mathbf{0} \\ \mathbf{0}\end{pmatrix}, \begin{pmatrix}
    \Sigma_{\tZ} & \bzero \\ \bzero & \Sigma_V
  \end{pmatrix} \right),
\end{align*}
where $\whV^*=\whZ^*-\tZ^*$.
Then we arrive the approximation
\begin{align*}
  \tilde\pi^*(\tZ^*)&=\int \bbF^*\{\whM^*=M \mid \tZ^*, \whV^*=v \}\cdot p_{\whV^*\mid \tZ^*}(v)\rd v\\
  &=\int \bbF^*\{\calS(\tZ^*+v )=M \mid \tZ^*, \whV^*=v \}\cdot p_{\whV^*\mid \tZ^*}(v)\rd v\\
  &\approx \int \PP{\calS(\tZ^*+v )=M \mid \tZ^*}\cdot  \varphi(v;\whV, \Sigma_V)\rd v\\
  &=\tilde\pi(\tZ^*+\whV),\numberthis\label{equ: pi star and pi}
\end{align*}
where the second equality is by the assumption that selected model $\whM^*$ only depends on $\whZ^*$ which is equal to $\tZ^*+\whV^*$, the third line is deduced from the above approximate normal distribution, and the last equality is by definition of $\tilde\pi$.
The approximate normal distribution will be justified by asymptotic normality of bootstrap in Section~\ref{sec: theory}. To ensure $\tilde\pi^*(\cdot)\approx \tilde\pi(\cdot)$ without the correction term $\whV$ in Equation~\eqref{equ: pi star and pi}, we will simply define $\whV^*$ to be $\whV^*-\whV$ in the following.

\paragraph*{Revisit the DTL example}
In the DTL example, all the statistics involved are averages of i.i.d. random variables, thus bootstrap consistency should hold. Specifically, $\whZ$ consists of the first-stage group means of the data, thus we have $\sqrt {n_1}(\whZ^* - \whZ )\convdis\N(0, I_K)$ as $n_1\goinf$.
Moreover, $\whV^*=e_{k^*}\whU_{k^*}$, where $\whU_{k^*}$ is the difference between the first-stage mean and global mean in the winner's group. So we also have $\frac{1}{\sqrt{\frac{1}{n_1} - \frac{1}{n_1+n_2} }} (\whU^*_{k^*} - \whU_{k^*}) \convdis\N(0,1) $ as $n_1,n_2\goinf$.

\subsection{Estimation algorithm}

Given the training data $\{(\tZ_i^*,\ell_i^*)\}_{i=1}^B$, our objective is to estimate the probability $\PP{\ell_i^*=1\mid \tZ_i^*}$. We consider a function class $\{f(\cdot;\xi),\xi\in\calP \}$ parameterized by $\xi\in\calP$, which is optimized by minimizing the empirical cross-entropy loss
\begin{align*}
  \xi^*:= \argmin_{\xi\in\calP} -\sum_{i=1}^B \ell_i^* \log f(\tZ_i^*; \xi) + (1-\ell_i^*) \log (1-f(\tZ_i^*; \xi)).
\end{align*}
The function $\hat\pi(\cdot):= f(\cdot;\xi^* )$ is then used to estimate $\tilde\pi^*(\cdot)$ and subsequently $\tilde\pi(\cdot)$.

If the true $\tilde\pi^*$ lies within the function class $\{f(\cdot;\xi),\xi\in\calP \}$, meaning $\tilde\pi^*(\cdot)=f(\cdot; \xi_0)$ for some $\xi_0\in\calP$, then $\xi^*$ will converge to $\xi_0$ at a rate of $O(B^{-1/2})$ under regularity conditions on $\tilde\pi^*$. In our experiment, we use a multilayer feedforward neural network as the chosen function class. Neural networks are considered universal function approximators, capable of approximating any continuous function on a compact set given a sufficient number of hidden units \citep{hornik1989multilayer}. However, the quality of the resulted function $f(\cdot;\xi^*)$ depends on various factors, such as the size and quality of the training data, network architecture, optimization algorithm, hyperparameters, and regularization techniques. Fine-tuning these factors is essential for achieving optimal performance. Techniques like splitting the data into training, validation, and test sets can be employed. The validation set is used for tuning while the test set is reserved for assessing the accuracy of the estimator. In addition, we provide an alternative to assessing the accuracy of $\hat\pi$ in Section~\ref{sec: assessing accuracy}.



\subsection{Computing p-values}
\label{sec: discrete exp family}

After obtaining the estimated selection probability $\hat\pi$, we approximate the (unnormalized) conditional density of $\hat\theta\mid\{\whM=M, \whW=W \}$ in Equation~\eqref{equ: cond density tilde pi} by
\[
\varphi(x; \theta, \Sigma)\cdot \hat \pi(\Gamma x + W).
\]
To facilitate fast evaluation of the CDF and inverse CDF of this distribution, we further approximate it with a discrete exponential family. Suppose first that $\theta$ is a univariate parameter and $\Sigma=\sigma^2$. Then we choose a grid $\{x_g\}_{g=1}^G$ and define the density supported on the grid as follows:
\begin{align}
\label{equ: discrete exp family}
\frac{1}{C} \sum_{g=1}^G \exp(\frac{\theta x_g}{\sigma^2} - \frac{x_g^2}{2\sigma^2} ) \cdot \hat{\pi}(\Gamma x_g+W) \cdot \indc{x=x_g},
\end{align}
where $C=\sum_{g=1}^G \exp(\frac{\theta x_g}{\sigma^2} - \frac{x_g^2}{2\sigma^2} ) \cdot \hat{\pi}(\Gamma x_g+W) $ is the normalizing constant.
This density maintains proportionality to $\varphi(x;\theta,\sigma^2)\cdot \hat{\pi}(\Gamma x+W)$ at the specified grid points $\{x_g\}_{g=1}^G$. Moreover, the discrete distribution \eqref{equ: discrete exp family} offers readily computable CDF and inverse CDF, facilitating the calculation of p-values and the construction of confidence intervals based on the conditional distribution of $\hat\theta$. 

If $\theta\in\R^s$ and $s>1$, to conduct inference for $\theta_j$, we need to eliminate the nuisance parameters. To achieve this, we further condition on $\hat\theta^\perp:=\hat\theta - \Sigma_{\cdot,j} \Sigma_{j,j}^{-1}  \theta_j$, and the conditional density of $\hat\theta_j\mid\{\whM=M, \whW=W, \hat\theta^\perp=\theta^\perp \}$ is proportional to
\[
\varphi(x; \theta_j, \Sigma_{j,j})\cdot\hat\pi(\Gamma(\Sigma_{\cdot,j} \Sigma_{j,j}^{-1}  x +\theta^\perp) + W ).
\]
To conduct inference for $\theta_j$ based on this univariate density, we apply the same discrete exponential family described above.

\subsection{Assessing the accuracy}
\label{sec: assessing accuracy}
Throughout our method, we have introduced several approximations, including the normal approximation of the pre-selection distribution of $\hat\theta$, the replacement of $\tilde\pi(\cdot)$ with $\hat{\pi}(\cdot)$, and the utilization of a discrete exponential family to approximate the conditional distribution. Here, we provide a practical way to check the reliability of these approximations.

Let $D^*$ be a bootstrap sample and assume $\hat\theta^*,\whZ^*,\whU^*,\whV^*,\whW^*$ all exhibit bootstrap consistency. Then the pre-selection distribution of $\hat\theta^*$ can be approximated by $\N(\hat\theta, \Sigma)$ where $\hat\theta$ is the statistic computed from the original data $D$. Analogously, the conditional density of $\hat\theta^*\mid \{\whM^*=M, \whW^*=W^*\}$ is approximately proportional to
\begin{align*}
  \varphi(x;\hat\theta,\Sigma )\cdot \hat \pi(\Gamma x + W^* ).
\end{align*}
Let $\widehat{H}^*(\cdot)$ denote the CDF corresponding to the above density. If this CDF is an accurate estimate of the exact CDF of the conditional distribution of $\hat\theta^*\mid \{\whM^*=M, \whW^*=W^*\}$, then $\whH^*(\hat\theta^*)\mid \{\whM^*=M \}$ should be approximately a pivot that is uniformly distributed on $[0,1]$. This observation provides a way to assess the accuracy of the quality of the approximations. Specifically, we can repetitively draw bootstrap samples $D^*$ such that the selection event $\{\whM^*=M\}$ happens, and compute the CDF $\widehat{H}^*(\hat\theta^*)$. These values are expected to be distributed approximately uniformly. If these computed values appear to be uniformly distributed, then it suggests that the estimated conditional distribution is a good approximation. We summarize this procedure in Algorithm~\ref{algo: check} in the case where $\hat\theta\in\R$.

\begin{algorithm}
\caption{Assessing the accuracy}
\label{algo: check}
\SetKwInOut{Input}{Input}
\SetKwInOut{Output}{Output}
\Input{Estimated selection probability function $\hat\pi$; target number $B$ of pivots to compute; observed value of $\hat\theta$; variance $\sigma^2$ of $\hat\theta$ }
\Output{Values of $\widehat{H}^*(\hat\theta^*)$ evaluated on the bootstrap datasets where the selection event occurs. }
$i=0$\\

\While {$i<B$}{
    Draw a dataset $D^*$ by bootstrapping\\
    Run the selection algorithm on $D^*$ to obtain $M^*$.\\
    \If {$M^*=M$}{
    $i\leftarrow i+1$\\
    Compute the statistics $\hat\theta^*$ and $W^*$\\
    Let $h^*(x)$ be the density proportional to
    $$h^*(x)\propto \varphi(x;\hat\theta, \sigma^2)\cdot \hat\pi(\Gamma x+W^*)$$\\
    Approximate the distribution with density $h^*(x)$ by the discrete exponential family introduced in Section~\ref{sec: discrete exp family}\\
    Evaluate the approximate CDF $\widehat{H}^*$ at $\hat\theta^*$
    }
}
  \end{algorithm}

\section{Theoretical analysis}
\label{sec: theory}

In the previous sections, we derived our method by assuming that $(\hat\theta, \whV, \whW)$ are jointly independently Gaussian. In this section, we will provide a guarantee that, under the asymptotic normality, the obtained confidence interval achieves the coverage probability for a single parameter of interest $\theta\in\R$. Let $\calF_n$ represent the class of distributions under consideration. Suppose that the dataset $D_n$ is drawn from the distribution $\bbF_n\in\calF_n$. Let $D_n^*$ denote a bootstrap sample generated from the original dataset $D_n$. We will use $\bbF_n^*$ to denote the bootstrap distribution conditional on $D_n$. Let $M_n$ be the selected model and let $\theta_n=\theta(\bbF_n,M_n)$ be the parameter of interest, which depends on the unconditional generating distribution of the data and the selected model. 

We assume that the statistic $\hat\theta_n$ satisfies $\sqrt n(\hat\theta_n - \theta_n)\convdis\N(0,\Sigma)$. Let $\whZ_n\in\R^d$ denote the basis computed from the original dataset $D_n$. Let $\Gamma_n=\Cov{\whZ_n, \hat\theta_n }\Var{\hat\theta_n}^{-1} $ and define $\whU_n=\whZ_n-\Gamma_n\hat\theta_n$. In our analysis, we assume all the covariance matrices are known. However, having uniformly consistent estimate of variance would suffice as in \cite{markovic2016bootstrap,tian2018selective}. Moreover, suppose $\whV_n\in\R^d$ is a statistic satisfying $\sqrt n \whV_n\convdis\N(0,\Sigma_V)$ and $\Cov{\whV_n, \hat\theta_n}=o_p(n^{-1})$. Note that $V_n$ can be trivially a constant 0. Lastly, let $\whW_n=\whU_n-\whV_n$ and $\tZ_n=\whZ_n - \whV_n=\Gamma_n\hat\theta_n + \whW_n $. For the bootstrap data $D_n^*$, we define the corresponding $\hat\theta^*_n,\whZ^*_n,\whU^*_n,\whV^*_n,\whW^*_n,\tZ^*_n$ similarly.

Our method involves estimating $\tilde\pi_n(\tZ)=\EE{\pi(\whZ_n)\mid\tZ_n=\tZ }$ using an estimator denoted as $\hat\pi_n$. Let $\whH_n(\cdot;\theta_n,\whW_n)$ be the CDF of the estimated conditional distribution of $\hat\theta_n$. Specifically, we define
\[
  \whH_n(x;\theta_n,\whW_n)=\frac{\int_{t\leq x} \varphi(t;\theta_n,\Sigma/n) \cdot \hat\pi(\Gamma_n t+\whW_n) \rd t}{\int_{\R} \varphi(t;\theta_n,\Sigma/n) \cdot \hat\pi(\Gamma_n t+\whW_n) \rd t },
\]
which leads to the construction of the interval
\begin{align}
  \calI(\hat\theta_n)=\calI(\hat\theta_n;\whW_n):=\left\{\theta_n: \alpha/2\leq \widehat H_n(\hat\theta_n;\theta_n,\whW_n)\leq 1-\alpha/2 \right\}.
  \label{equ: CI estimator}
\end{align}
We will show that under assumptions to be stated below, this interval achieves the asymptotic coverage guarantee in the sense that
\begin{align}
\label{equ: asymp coverage}
\liminf_{n\goinf}\inf_{\bbF_n\in\calF_n}\bbF_n\left\{\theta_n\in\calI(\hat\theta_n;\whW_n) \right\}\geq 1-\alpha.
\end{align}

First, we assume that the distribution of $R_n:=\sqrt n(\hat\theta_n - \theta_n, V_n)$ is close to a normal distribution in the sense that their Wasserstein-1 distance $W_1$ converges uniformly to 0. The $W_1$ distance between two distributions $\mu$ and $\nu$, using Kantorovich–Rubinstein duality \citep[Theorem 5.10]{villani2009optimal}, is defined as 
\[
W_1(\mu,\nu):=\sup\left\{\int f \rd \mu - \int f \rd \mu: f \text{ is }1 \text{ Lipschitz} \right\}.
\]
\begin{assumption}[Asymptotic normality of pre-selection distribution]\label{assump: pre-selection distribution}
Assume that
\begin{align*}
\lim_{n\goinf}\sup_{\bbF_n\in\calF_n}W_1(R_n, R_\infty )=0,\quad \text{ where } R_\infty\sim\N\left(\begin{pmatrix}\bzero \\ \bzero\end{pmatrix}, \begin{pmatrix}
\Sigma & \bzero \\ \bzero & \Sigma_V
\end{pmatrix}\right).
\end{align*}
\end{assumption}
Convergence in $W_1$ is stronger than weak convergence alone. In fact, convergence in the $W_1$ distance is equivalent to weak convergence combined with convergence of the first moment \citep[Theorem 5.11]{santambrogio2015optimal}, which is also a relatively mild condition. 
Next, we assume that the root $R_n^*=\sqrt n(\hat\theta_n^* - \hat\theta_n, \widehat V_n^*)$ from the bootstrap data converges in the same sense to $R_\infty$.
\begin{assumption}[Bootstrap consistency]\label{assump: bootstrap consistency}
Assume that
  \begin{align*}
    \lim_{n\goinf}\sup_{\bbF_n\in\calF_n} W_1(R_n^*, R_\infty )=0.
  \end{align*}
\end{assumption}
The weak convergence of $R_n^*$ to $R_\infty$ holds under various conditions. For example, if all the statistics are averages of $n$ samples, then the weak convergence follows from the bootstrap consistency of non-parametric means, see e.g. \cite[Theorem 15.4.5]{lehmann1986testing}. It would also be true if the statistics are \emph{linearizable statistics} \citep{chung2013exact}, which is a common assumption in the literature to establish asymptotic coverage guarantees \citep{tian2018selective,markovic2016bootstrap}.

\begin{assumption}[Smooth selection probability]\label{assump: smooth pi}
The selection probability function $\pi(Z)$ is Lipschitz continuous within a neighborhood of $\whZ_n$. Formally, we assume there exists $\delta_0>0$, such that $\pi(\cdot)$ is $L$-Lipschitz continuous within$\{Z\in\R^d:\|Z-\whZ_n\|_\infty\leq \delta_0 \}$.
\end{assumption}
Since $\pi(\cdot)$ is completely determined by the selection algorithm, this assumption implies that the selection algorithm remains relatively consistent and does not exhibit rapid and unpredictable changes due to minor variations in the input dataset. This is crucial because if $\pi(\cdot)$ were to exhibit abrupt fluctuations due to small perturbations in the input data, accurate estimation of the function $\pi(\cdot)$ would become elusive. The Lipschitz continuity property of $\pi(\cdot)$ also complements Assumptions~\ref{assump: pre-selection distribution} and \ref{assump: bootstrap consistency}, where the convergence of distributions is quantified using the Wasserstein 1 distance.

The next assumption concerns the algorithm used to estimate the selection probability function. Recall that the training data $\{(\tZ_i^*,\ell_i^*)\}_{1\leq i\leq B}$ described in Section~\ref{sec: generate training} satisfies $\PP{\ell_i^*=1}=\tilde\pi^*_n(\tZ_i^*)$. Hence, assuming an adequately representative function class and sufficient training data, the estimation of $\tilde\pi^*_n$ can be expected to be accurate. This assumption is formulated as follows.
\begin{assumption}[Estimate of $\tilde\pi^*$.]\label{assump: hat pi}
Assume that
\begin{align*}
\lim_{n\goinf}\sup_{\bbF\in\calF_n}\bbF_n\left\{\int_{\R} \varphi(t;\theta_n,\Sigma/n)  \big|\hat\pi_n(\Gamma_n t+\whW_n) - \tilde\pi^*_n(\Gamma_n t+\whW_n)\big| \rd t\geq\ep \right\}=0.
\end{align*}
\end{assumption}
This assumption implies that, after integrating out $\hat\theta_n$ over its approximated normal distribution, the difference between $\hat\pi_n$ and $\tilde\pi^*_n$ converges in probability to 0 uniformly. As $\varphi(t;\theta_n,\Sigma/n)$ concentrates around $\theta_n$, the assumption highlights the significance of accurately estimating $\tilde\pi^*_n(\cdot)$ particularly along the direction of $\Gamma_n$ and within a neighborhood of the observed $\tZ_n$.
We are now ready to state the central theorem which establishes that $\calI(\hat\theta_n;\whW_n)$ achieves an asymptotic coverage probability of $1-\alpha$.
\begin{theorem}[Asymptotic coverage probability]\label{thm}
Under Assumptions~\ref{assump: pre-selection distribution}, \ref{assump: bootstrap consistency}, \ref{assump: smooth pi} and \ref{assump: hat pi}, Equation~\eqref{equ: asymp coverage} holds true. That is, the interval $\calI(\hat\theta_n;\whW_n)$ in Equation~\eqref{equ: CI estimator} possesses an asymptotic coverage probability of $1-\alpha$.
\end{theorem}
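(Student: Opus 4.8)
The plan is to show that the random variable $\whH_n(\hat\theta_n;\theta_n,\whW_n)$ is asymptotically uniform on $[0,1]$ conditional on the selection event $\{\whM_n=M_n\}$; since $\theta_n\in\calI(\hat\theta_n;\whW_n)$ holds exactly when this pivot lies in $[\alpha/2,1-\alpha/2]$, asymptotic uniformity delivers \eqref{equ: asymp coverage} (and the conditional-on-selection version that it averages to). The anchor is the probability integral transform: if $H_n^{\mathrm{ex}}(\cdot;\theta_n,W_n)$ denotes the \emph{exact} CDF of $\hat\theta_n$ given $\{\whM_n=M_n,\whW_n=W_n\}$, whose density is proportional to $p_{\hat\theta_n}(x;\theta_n)\,\tilde\pi_n(\Gamma_n x+W_n)$, then $H_n^{\mathrm{ex}}(\hat\theta_n;\theta_n,\whW_n)$ is \emph{exactly} uniform given $\{\whM_n=M_n,\whW_n=W_n\}$, hence also given $\{\whM_n=M_n\}$ after averaging over $\whW_n$. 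The whole problem therefore reduces to showing that $\big|\whH_n(\hat\theta_n;\theta_n,\whW_n)-H_n^{\mathrm{ex}}(\hat\theta_n;\theta_n,\whW_n)\big|=o_p(1)$ uniformly over $\bbF_n\in\calF_n$.

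First I would rescale, substituting $t=\theta_n+u/\sqrt n$ so the Gaussian factor becomes the fixed weight $\varphi(u;0,\Sigma)$ and the argument of the selection probability becomes $\Gamma_n\theta_n+\Gamma_n u/\sqrt n+\whW_n$, which stays in the $O(1/\sqrt n)$ neighbourhood of $\whZ_n$ where Assumption~\ref{assump: smooth pi} grants $L$-Lipschitzness of $\pi$, and hence of $\tilde\pi_n$. In this rescaled form the numerator and denominator defining each CDF are integrals of a bounded, locally Lipschitz factor against a standard-Gaussian-type weight, so the limiting conditional law has a bounded (Lipschitz) CDF. This is what lets me upgrade the Wasserstein control of Assumptions~\ref{assump: pre-selection distribution} and \ref{assump: bootstrap consistency} into control of the CDFs themselves, since $W_1$-convergence to a law with a bounded density implies convergence of the CDFs in Kolmogorov distance.

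Then I would bound the discrepancy through a chain of three intermediate CDFs, each difference governed by one assumption: (i) replacing the exact pre-selection law of $\hat\theta_n$ by $\varphi(\cdot;\theta_n,\Sigma/n)$ is controlled by Assumption~\ref{assump: pre-selection distribution}, using the Lipschitzness of $\pi$ to express the relevant integrals as expectations of Lipschitz test functions of the root $R_n$; (ii) replacing the population selection probability $\tilde\pi_n$ by its bootstrap counterpart $\tilde\pi_n^*$ is controlled by Assumption~\ref{assump: bootstrap consistency} together with Equation~\eqref{equ: pi star and pi}, bounding $|\tilde\pi_n-\tilde\pi_n^*|$ by $L$ times the $W_1$ distance between the laws of $\sqrt n\,\whV_n$ and $\sqrt n\,\whV_n^*$ (both converging to $\N(0,\Sigma_V)$); and (iii) replacing $\tilde\pi_n^*$ by the estimate $\hat\pi_n$ is exactly Assumption~\ref{assump: hat pi}, already phrased as the $\varphi$-weighted $L^1$ error appearing in both numerator and denominator. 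Throughout I would keep the normalizing denominators, which equal the conditional selection probabilities, bounded away from $0$, so that controlling numerators and denominators separately controls the ratio.

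The main obstacle I anticipate is steps (i)--(ii): matching the \emph{local} Lipschitz hypothesis of Assumption~\ref{assump: smooth pi} with the \emph{global} test-function requirement of the Kantorovich--Rubinstein dual. Because $\pi$ is Lipschitz only on $\{Z:\|Z-\whZ_n\|_\infty\le\delta_0\}$, the defining integrals must be split into a central region, where $\pi$ is genuinely Lipschitz and the $W_1$ bounds apply directly, and a tail region, where I can only use $0\le\pi\le 1$ and must absorb the contribution via Gaussian concentration of the rescaled root together with the first-moment control implicit in $W_1$-convergence. Reconciling these two regimes uniformly over $\bbF_n\in\calF_n$, while simultaneously keeping the denominators bounded below, is the delicate part; the remaining estimates are routine once this CDF-level comparison is in place.
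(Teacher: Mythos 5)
Your proposal is correct and follows essentially the same route as the paper: reduce coverage to uniform convergence of $\whH_n$ to the exact conditional CDF (the paper's Lemma adapted from Romano et al., which is just your probability-integral-transform step), split the discrepancy into a term handled by Assumption~\ref{assump: hat pi} and a Gaussian/bootstrap-approximation term, and control the latter via Kantorovich--Rubinstein duality with a smoothed indicator after truncating to the $O(n^{-1/2})$ neighborhood of $\whZ_n$ where Assumption~\ref{assump: smooth pi} applies --- exactly the content of the paper's Lemma on the first term. The only differences are cosmetic: the paper routes both the pre-selection and bootstrap comparisons through a common Gaussian idealization $\underline{J}_n$ at the level of joint integrals over $(\hat\theta_n,\whV_n)$, so that only the joint $W_1$ assumptions on the roots $R_n$ and $R_n^*$ are invoked (rather than your pointwise bound on $|\tilde\pi_n-\tilde\pi_n^*|$, which would additionally require $W_1$ control of conditional laws of $\whV_n$ given $\tZ_n$), and you explicitly flag keeping the normalizing denominators bounded away from zero, a point the paper leaves implicit.
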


\section{Applications and simulations}
\label{sec: simu}

In this section, we apply the proposed methodology to perform post-selection inference for various problems.

The implementation details of our blackbox method (BB) are as follows. We generate a training dataset consisting of 3000 data points. We include the original dataset in the training set so that there will be at least one positive label, i.e. selecting the model $M$. In situations where either the fraction of positive or negative labels is less than 10\%, we duplicate these data points. This replication ensures that both labels constitute at least 20\% of the data, creating a more balanced training set. The architecture of the neural network consists of three hidden layers, each comprising 200 neurons. The hidden layers use the ReLU activation function while the output layer uses the sigmoid activation. The optimization is performed using Adam \citep{kingma2014adam} with a learning rate 0.001. We employ a minibatch size of 200 and train the neural network for 3000 epochs. After obtaining an estimator $\hat\pi$ of $\tilde\pi$, we approximate the conditional distribution using the discrete exponential family given in Equation~\eqref{equ: discrete exp family}, where the 100 grid points $\{x_g\}_{g=1}^G$ are equally spaced on the interval $\hat\theta\pm 10 \hat\sigma$, where $\hat\sigma^2$ denotes the estimated variance of $\hat\theta$.

All the covariances including $\sigma^2,\Gamma$ are estimated using the same bootstrapped dataset employed for generating the training data. We aim for a target coverage probability of $1-\alpha=0.9$ for the confidence intervals. The evaluation is based on two metrics: the average coverage probability, denoting the proportion of confidence intervals that correctly cover the target parameters, and the average interval lengths. We repeat each experiment 200 times and report the average coverage probabilities and interval lengths. Error bars depicted in the figures represent 95\% confidence intervals constructed through bootstrapping from the 200 replicates.

\subsection{Drop the loser}

We begin by considering the DTL example described in Section~\ref{sec: dtl}.
The observations are generated from $K=50$ groups, with each group containing $n_1$ observations. The winner's group $k^*$ observes another $n_2=n_1/4$ observations in the second stage. We set all $\theta_k=0$, meaning that there is no effect in all groups. 
We define the test statistic to be $\hat\theta=\frac{n_1\widebar X_{k^*} + n_2 \widebar Y}{n_1+n_2}$ and let $\hat s^2$ denote the estimated variance of the data point. We compare the following four methods:
\begin{itemize}
\item Naive: Construct confidence intervals as $\hat\theta\pm \Phi^{-1}(1-\alpha/2) \hat s/\sqrt{n_1+n_2} $, disregarding the selection effect.

\item Splitting: Construct confidence intervals using solely the second-stage data as $\widebar Y\pm \Phi^{-1}(1-\alpha/2) \hat s/\sqrt{n_2}  $.

\item BB: the proposed method with the basis $\whZ=(\widebar X_1,\ldots,\widebar X_K )$ as introduced in Section~\ref{sec: dtl}

\item BB+marginalized: the proposed method using the basis $\tZ$ such that $\tZ_{k^*}=\hat\theta$ and $\tZ_{j}=\widebar X_j$ for $j\neq k^*$ as described in Section~\ref{sec: condition less}
\end{itemize}

The results are presented in Figure~\ref{fig: dtl}. The $x$-axes represent the sample size $n_1$, which is varied across $\{100,200,400\}$. The left panel of the figure demonstrates the average coverage probabilities and the right panel shows the average interval lengths.
A few observations are in order:
\begin{itemize}
\item The Naive method, which ignores the selection effect, exhibits significant under-coverage.

\item The Splitting method, which only uses the second-stage data for inference, achieves the desired coverage level but yields the longest intervals.

\item The two blackbox methods, BB and BB+marginalized, successfully achieve the intended coverage while producing shorter intervals in comparison to the splitting method. This finding illustrates the improved power of the conditional approach over the splitting method.

\item Furthermore, the intervals produced by the BB+marginalized method are shorter than those generated by the BB method. This observation aligns with our expectation that conditioning less leads to shorter intervals on average.
\end{itemize}

\begin{figure}
  \centering
  \includegraphics[width=.8\textwidth]{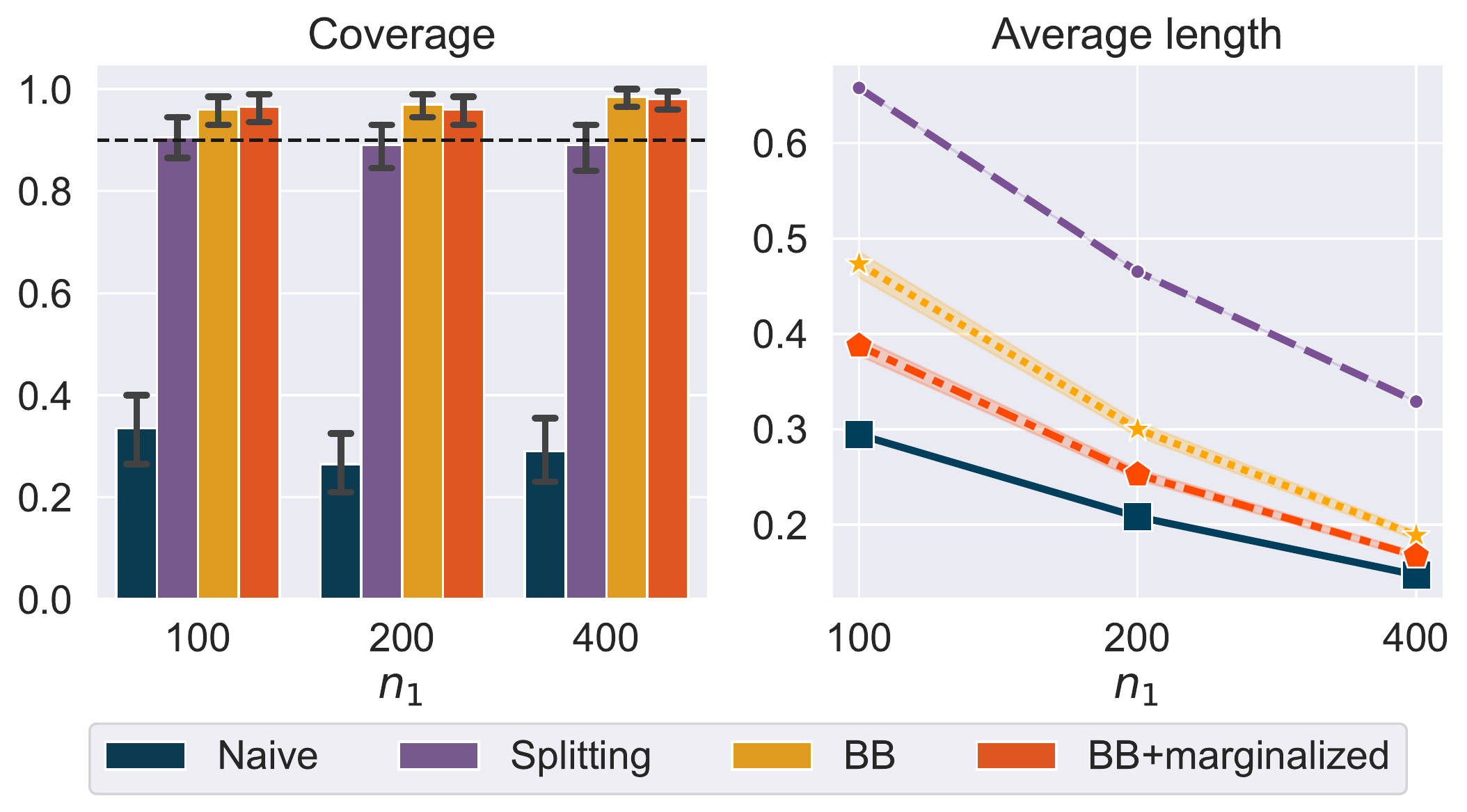}
  \caption{Average coverage probabilities (left panel) and interval lengths (right panel) for the drop-the-loser simulation. The intended coverage probability is 0.9, as depicted by the dashed line. The error bars represent the 95\% confidence intervals produced by bootstrapping from the 200 repetitions. The $x$-axes correspond to the first-stage sample size $n_1$. }
  \label{fig: dtl}
\end{figure}

\paragraph*{Assessing the accuracy}
In Section~\ref{sec: assessing accuracy}, we provided a method to evaluate the accuracy of the estimated selection probability. Here, we apply this procedure to a simulation to demonstrate its usage. 
Specifically, we apply Algorithm~\ref{algo: check} to generate 300 pivots under the bootstrap distribution and plot the empirical CDF of the pivots in Figure~\ref{fig: dtl empirical cdf}. Notably, the orange line, representing the empirical CDF of the generated pivots, closely aligns with the CDF of the uniform distribution (depicted by the dotted line). This alignment suggests that the approximated conditional distribution is indeed accurate. In contrast, the empirical CDF based on the unadjusted distribution, i.e. with $\hat\pi\equiv1$, significantly deviates from the uniform distribution, as indicated by the blue line.

\begin{figure}
\centering
\includegraphics[width=.5\textwidth]{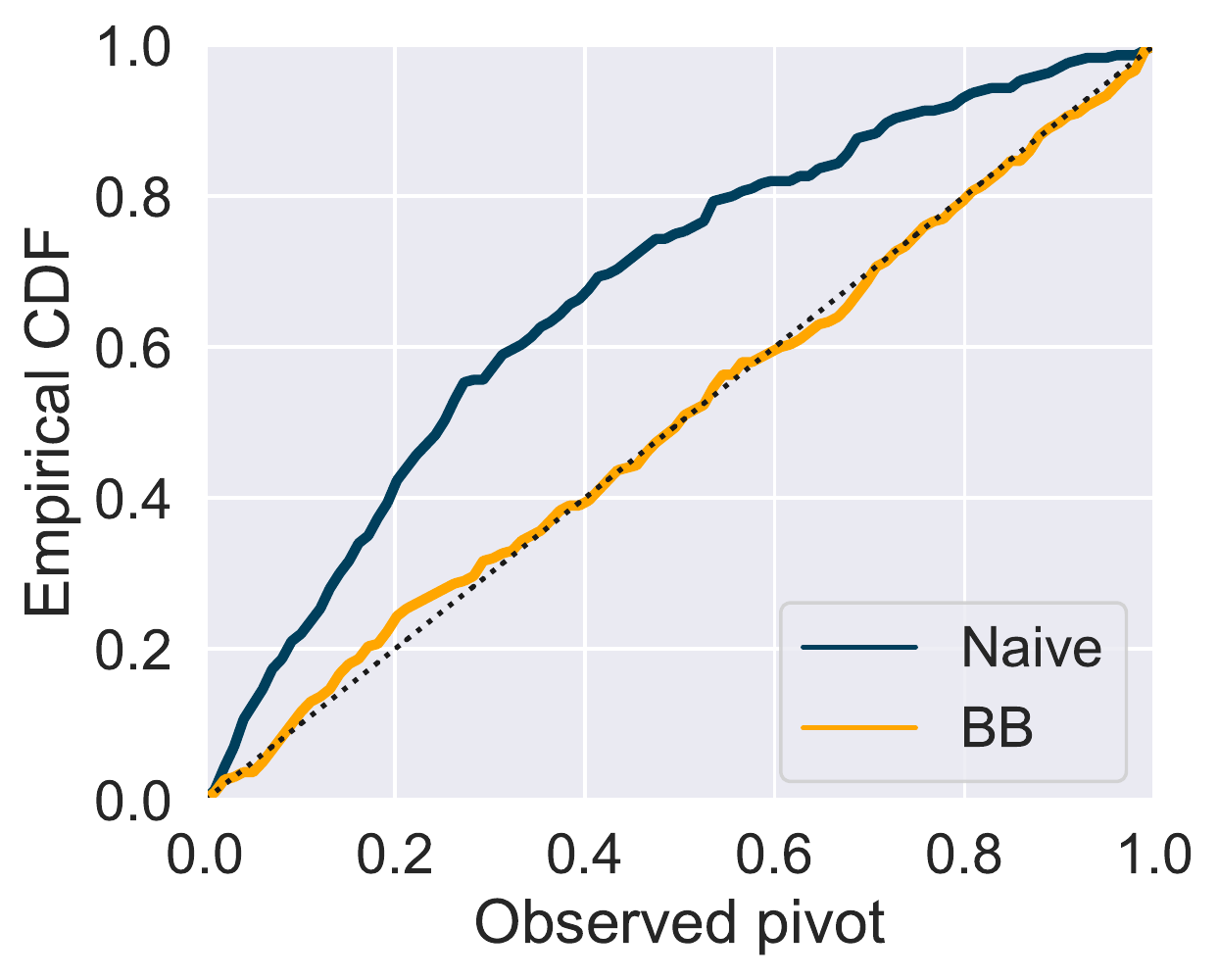}
\caption{Evaluating the accuracy of the estimated conditional distribution. The orange line represents the empirical CDF of the 300 pivots obtained by Algorithm~\ref{algo: check} using the estimated $\hat\pi$. The dotted line represents the CDF of the uniform distribution on $[0,1]$. The blue line represents the CDF of the pivots when no adjustment is applied, i.e. Algorithm~\ref{algo: check} is applied with $\hat\pi$ set to a constant 1. The alignment between the orange and the dashed lines indicates the accuracy of the estimated conditional distribution.}
\label{fig: dtl empirical cdf}
\end{figure}

\subsection{Lasso}
\label{sec: simu lasso}

Next, we proceed to apply the proposed method to what is arguably one of the most important post-selection inference problems: conducting inference after the lasso selection. As mentioned earlier, there have been several recent proposals for performing the randomized lasso to enhance the power of inference following selection. To explore this scenario, we employ the lasso with data carving, which uses 80\% of the data for the lasso selection. 

We consider a setup where the number of observations is set to $n=400$ and the number of features is $p=50$. The data is generated as follows: the observations $x_i$ are drawn independently from $\N_p(0,\Sigma_X)$ and $y_i\mid x_i\sim\N(x_i\tran\beta, 1)$. The covariance matrix $\Sigma_X$ is chosen to be the auto-regressive matrix with the $(i,j)$-entry being $0.3^{|i-j|}$. The regression coefficient vector $\beta$ is designed to be a sparse vector with 10 nonzero coefficients, which are equal to $\pm\sqrt{2c_0\frac{\log p}{n}}$ with random signs. Here, $c_0$ is the signal strength that will be varied across $\{0.6, 0.9, 1.2\}$. The lasso regularization parameter is fixed to be the constant $\sqrt{\log(p) / n}$ as suggested by \cite{negahban2012unified}. The above setup closely follows the simulation conducted in \cite{panigrahi2022approximate}.

For the proposed BB method, we choose the basis to be $\whZ=\frac1{n_1} X^{(1),\intercal} Y^{(1)}$, where $(X^{(1)},Y^{(1)})\in \R^{n_1\times (p+1)}$ is the random subset of data used for the lasso. We take $n_1=\lfloor 0.8n \rfloor$. The basis $\whZ$ is an average of $n_1$ i.i.d. quantities, thus is expected to be approximately normally distributed. More importantly, the lasso selection is completely characterized by the quantities $X^{(1),\intercal} X^{(1)}$ and $X^{(1),\intercal} Y^{(1)}$. Given that the design matrix has been normalized such that $X^{(1),\intercal} X^{(1)}$ is essentially constant, it is not necessary to include it in the basis. 

We consider the following competing methods. The Naive method constructs the classic Wald-type confidence intervals ignoring the selection effect. The Splitting method uses solely the hold-out 20\% data for inference. Moreover, we consider the selective MLE method recently proposed by \cite{panigrahi2022approximate}. This MLE method constructs Wald-type confidence intervals based on the MLE of the selection-adjusted likelihood and the corresponding Fisher information matrix. We use the implementation available at the GitHub repository\footnote{\url{https://github.com/jonathan-taylor/selective-inference}}.

The results are presented in Figure~\ref{fig: random lasso}. The $x$-axes correspond to the signal strength $c_0$. A few observations are in order:
\begin{itemize}
\item As anticipated, the Naive intervals do not achieve the correct coverage. The Splitting intervals achieve the intended coverage, but at the expense of long interval lengths.

\item Both the proposed BB method and the MLE method from \cite{panigrahi2022approximate} achieve the desired coverage and exhibit similar interval lengths.  

\end{itemize}

It is worth emphasizing that the MLE method is tailored specifically for the lasso problem, whereas our proposed approach is a generic algorithm that does not rely on specific structures unique to the lasso. This highlights the potential applicability of the proposed method in scenarios where no specialized algorithm exist or is possible. We will explore these scenarios in the subsequent examples.

\begin{figure}
\centering
\includegraphics[width=.8\textwidth]{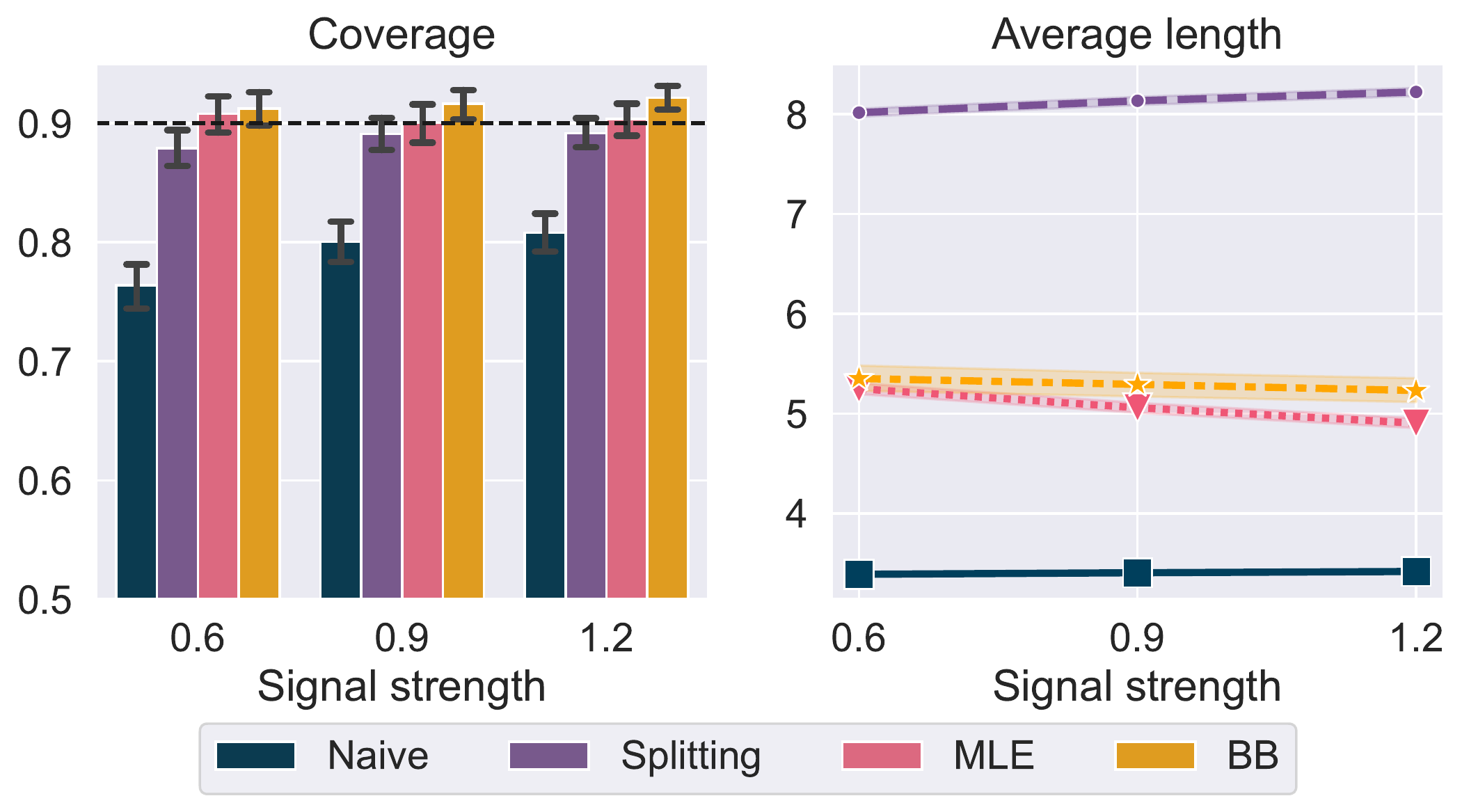}
\caption{Average coverage probabilities and lengths for the lasso simulation. The $x$-axes correspond to the signal strength $c_0$. The protocol is similar as in Figure~\ref{fig: dtl}.
}
\label{fig: random lasso}
\end{figure}

\subsection{Knockoff}

The knockoff filter \citep{barber2015controlling} offers a methodology for selecting variables in a linear regression model while controlling the false discovery rate (FDR). It is not itself a variable selection algorithm, but operates on existing ones, such as the lasso, to ensure FDR control. Our goal here is to construct valid confidence intervals for the variables selected by the knockoff filter. 

We use the same data generated in the lasso example in Section~\ref{sec: simu lasso}. We perform variable selection using the Gaussian Model-X knockoff with the lasso algorithm, targeting an FDR at 0.2. The basis $\whZ$ is chosen to be the same as in the lasso example.

The results are presented in Figure~\ref{fig: knockoff}. It is evident that the Naive method, which ignores the selection effect, fails to achieve the desired coverage. However, our proposed method successfully achieve the intended coverage probability. Consequently, our approach provides a solution for conducting valid post-selection inference in scenarios involving more complex selection procedures that were previously deemed infeasible.

\begin{figure}
  \centering
  \includegraphics[width=.8\textwidth]{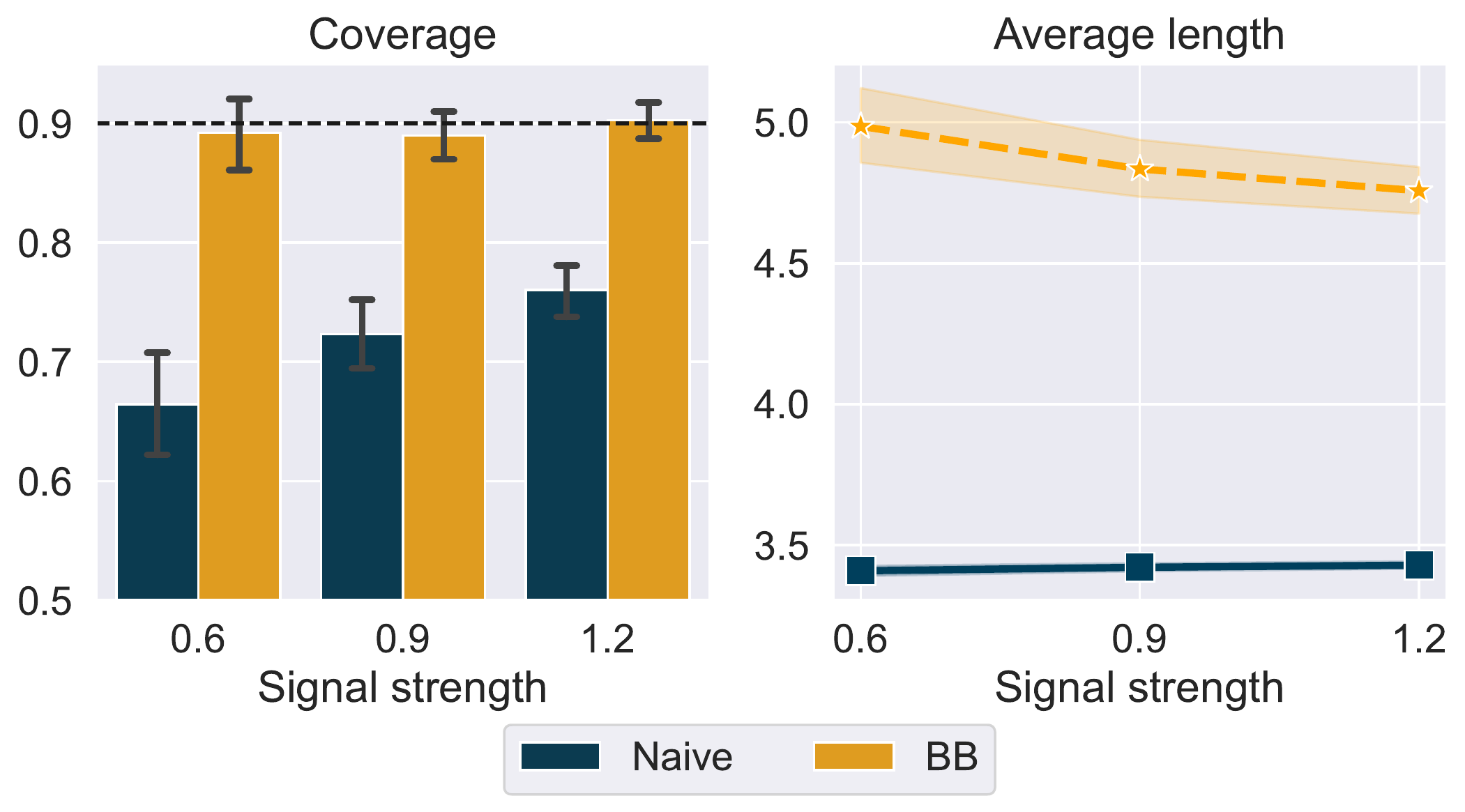}
  \caption{Average coverage probabilities and lengths for the knockoff simulation. The $x$-axes correspond to the signal strength $c_0$. The protocol is similar as in Figure~\ref{fig: dtl}.}
  \label{fig: knockoff}
\end{figure}




\subsection{Benjamini-Hochberg procedure}

The Benjamini-Hochberg (BH, \citep{benjamini1995controlling}) procedure is a multiple testing procedure that controls the false discovery rate (FDR). In our context, we apply the BH procedure to identify a subset of potentially non-null effects and subsequently perform inference for the selected effects.

Consider a scenario with $K=20$ treatments denoted by $\theta_k$ ($1\leq k\leq K$). For each treatment group, we gather $n=300$ independent observations $x_{k,i}\sim\N(\theta_k, 1)$ and compute the mean effect $\widebar X_k=\frac1n\sum_{i=1}^n x_{k,i}$. We then compute the p-values as $p_k=2\Phi(-\sqrt n |\widebar X_k|)$ for $1\leq k\leq K$, and apply the BH procedure on $p_1,\ldots,p_K$ with the target FDR set to 0.2. The parameters are set to be $\theta_{1:4}=-\theta_{5:8}= \theta_0$, and $\theta_{9:20}=0$. We vary the signal strength $\theta_0$ across $\{0.05, 0.1, 0.2\}$. For our blackbox method, the basis $\whZ$ is chosen to consist of the group means $(\widebar X_1,\ldots,\widebar X_K)$, as they completely determine the selection of the parameters.

The results are presented in Figure~\ref{fig: bh}. Consistent with previous examples, the proposed BB method effectively adjusts for the selection effect introduced by the BH procedure. Without the adjustment, the Naive method is severely biased especially in the situation with a weak signal strength.

\begin{figure}
\centering
\includegraphics[width=.8\textwidth]{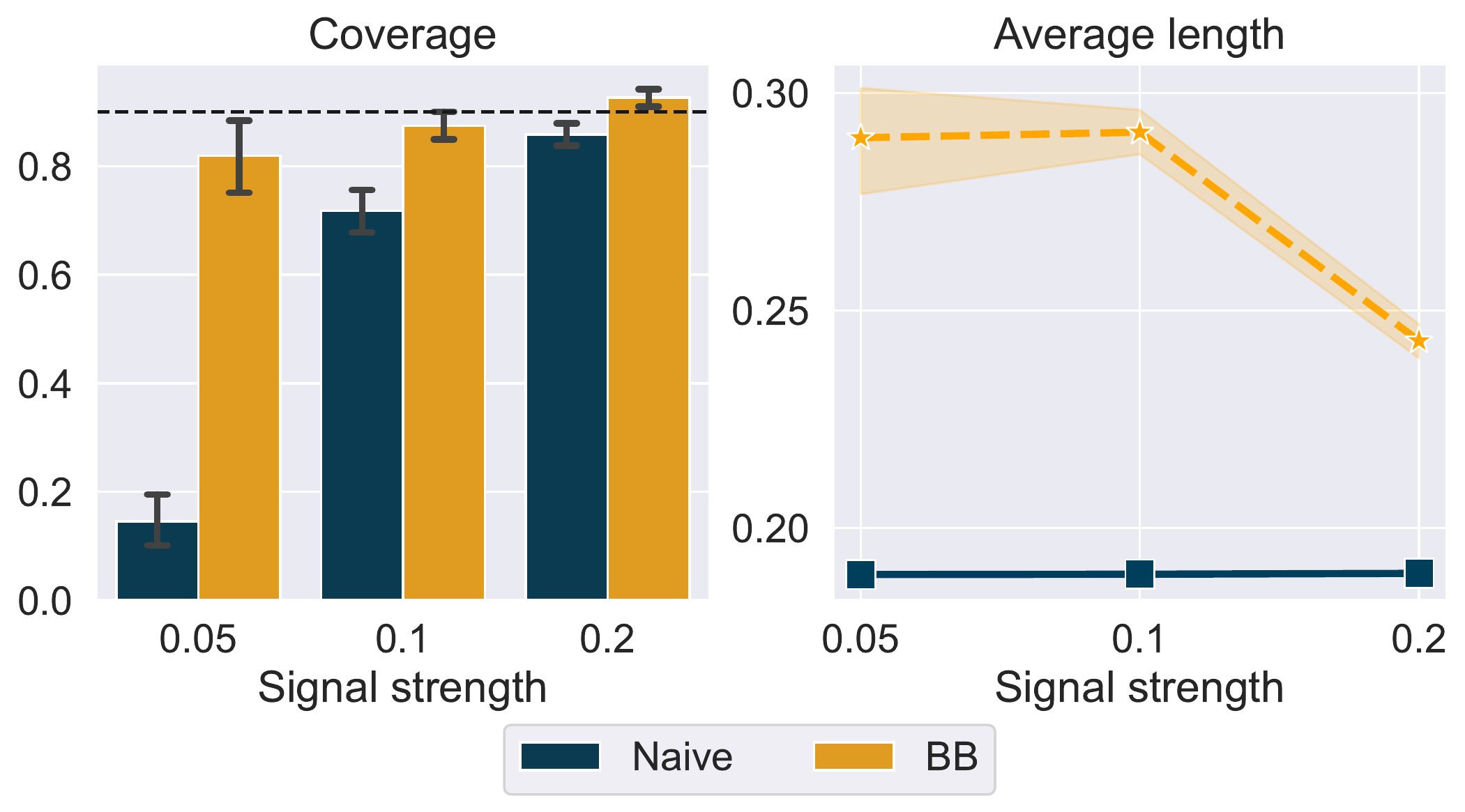}
\caption{Average coverage probabilities and lengths for the BH example. The $x$-axes correspond to the signal strength $\theta_0$. The protocol is similar as in Figure~\ref{fig: dtl}.}
\label{fig: bh}
\end{figure}

\subsection{Repeated significance test}

Repeating a hypothesis test while accumulating more data until achieving significance can lead to an increased risk of Type I errors \citep{armitage1969repeated}. In this context, we examine a scenario where data is iteratively gathered until a two-sample $t$-test achieves significance. This situation is particularly relevant in commercial A/B testing, where experimenters can continuously monitor the sample size while looking at the reported $p$-values.

Consider two populations $\N(\mu_1,\sigma^2)$ and $\N(\mu_2,\sigma^2)$. Suppose one is interested in whether $\mu_1=\mu_2$. We start with 100 observations drawn from both populations and perform a two-sample $t$-test to test the hypothesis $\mu_1=\mu_2$, which is rejected if the p-value is smaller than 0.1. If it is not rejected, an additional $50$ observations are sampled from both populations and the $t$-test is repeated again using the combined data. This process is repeated until the $t$-test is significant. Our goal here is to construct a confidence interval for the effect difference $\mu_1-\mu_2$ utilizing the data collected up to the point when a significant outcome is achieved in the $t$-test.

Suppose that the two-sample $t$-test is rejected at the $T$-th stage. In our proposed method, to generate one pair of training data $(\tZ_i^*, \ell_i^*)$, we bootstrap the same amount of data in $T$ stages in the same manner as described above. A two-sample $t$-test is conducted using the accumulated bootstrap data at each of the $T$ stages. If the none of the $T$ tests is rejected, we let $\ell_i^*=0$; otherwise, $\ell_i^*=1$. The basis $\tZ_i^*$ consists of the sample means and sample standard deviations at each of the $T$ stages for the two samples of data. This is because the two-sample $t$-tests are completely determined by the sample means and standard deviations. 

The results are shown in Figure~\ref{fig: repeated}.  The $x$-axes represents the effect size $\mu_1-\mu_2$. The naive method constructs confidence intervals using all the accumulated data without adjusting for the selection effect. When the effect size is 0, we observe that the naive intervals hardly ever cover the true parameter. As the effect size increases, the selection effect becomes weaker, thus the naive interval has higher coverage. In contrast, the proposed method consistently achieves the advertised coverage probability across all scenarios.

\begin{figure}
  \centering
  \includegraphics[width=.8\textwidth]{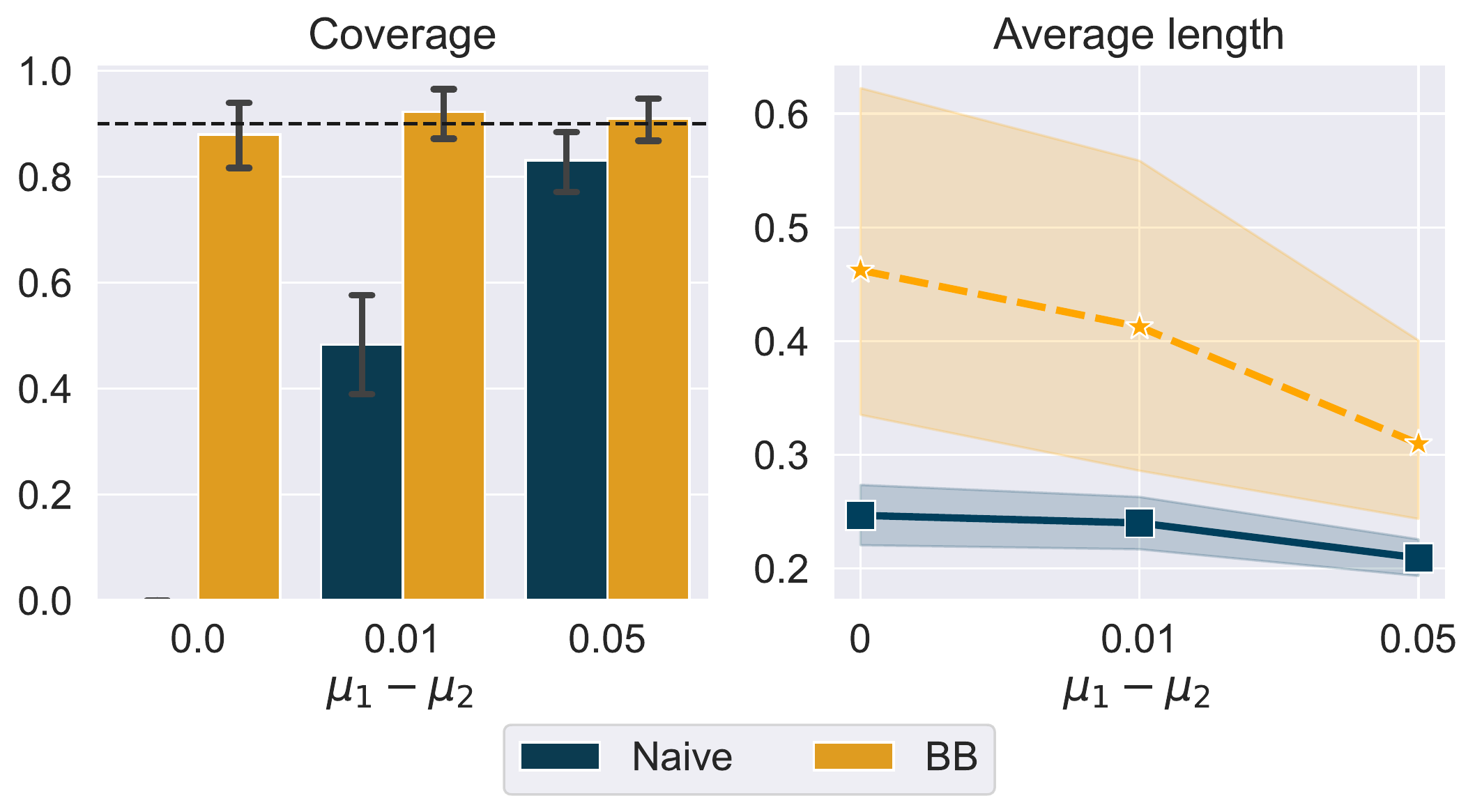}
  \caption{Average coverage probabilities and lengths for the parameter $\mu_1-\mu_2$ in the repeated significantly testing example. The $x$-axes correspond to the effect size $\mu_1-\mu_2$. The protocol is similar as in Figure~\ref{fig: dtl}}
  \label{fig: repeated}
\end{figure}

\section{Conclusion}

This paper introduced a versatile approach for conducting conditional selective inference following a selection procedure where an exact characterization of the selection event is not readily available. Our method involves repeatedly executing the selection algorithm on bootstrapped datasets to gather information about the selection probability. Despite its computational demands, this approach provides a solution that extends the scope of conditional selection inference, previously constrained to simple selection rules like the lasso. We demonstrated the usage and effectiveness of the proposed approach through a series of applications.

\bibliographystyle{apalike}
\bibliography{arxiv.bbl}

\clearpage

\appendix
\section{Proofs}
\label{sec: appendix}

\subsection{Proof of Theorem \ref{thm}}
\label{prf: thm}
\begin{proof}
By Lemma \ref{lemma}, it suffices to prove that for any $\ep>0$, 
\begin{align*}
\lim_{n\goinf}\sup_{\bbF_n\in\calF_n}\bbF_n\left\{\sup_{x\in\R^s} |\widehat H_n(x; \theta_n,\whW_n) - H_n(x; \theta_n,\whW_n)|\geq\ep \right\}=0.
\end{align*}
Moreover, it suffices to prove
\begin{align}
  \label{equ: prf uniform conv of CDF}
  \lim_{n\goinf}\sup_{\bbF_n\in\calF_n}\bbF_n\left\{\sup_{x\in\R^s} \bigg|\int_{t\leq x} p_{\hat\theta_n,\theta_n}(t)\cdot \tilde\pi_n(\Gamma_n t+\whW_n )\rd t - \int_{t\leq x} \varphi(t;\theta_n,\Sigma/n)\cdot \hat{\pi}_n(\Gamma_n t+\whW_n)\rd t \bigg|\geq\ep \right\}=0,
\end{align}
where $p_{\hat\theta_n,\theta_n}$ represents the density of $\hat\theta_n$ parametrized by $\theta_n$. We decompose the error in the above display into two terms by triangular inequality:
\begin{align*}
&|\int_{t\leq x} p_{\hat\theta_n,\theta_n}(t)\cdot \tilde\pi_n(\Gamma_n t + \whW_n)\rd t - \int_{t\leq x} \varphi(t;\theta_n,\Sigma/ n)\cdot \hat\pi_n(\Gamma_n t+\whW_n)\rd t|\\
&\leq |\int_{t\leq x} p_{\hat\theta_n,\theta_n}(t)\cdot \tilde\pi_n(\Gamma_n t + \whW_n)\rd t - \int_{t\leq x}  \varphi(t;\theta_n,\Sigma/n)\cdot \tilde\pi^*_n(\Gamma_n t + \whW_n)\rd t| \\
&\quad +|\int_{t\leq x}  \varphi(t;\theta_n,\Sigma/n)\cdot \tilde\pi^*_n(\Gamma_n t + \whW_n)\rd t - \int_{t\leq x} \varphi(t;\theta_n,\Sigma/n)\cdot \hat\pi_n(\Gamma_n t+\whW_n)\rd t|
\end{align*}

To bound the first term, we use the following lemma, which is proved in Section~\ref{prf: first term}
\begin{lemma}[Convergence of the first term]
  \label{lemma: first term}
  \begin{align*}
    \lim_{n\goinf} \sup_{\bbF_n\in\calF_n} \bbF_n \left\{\sup_{x\in\R^s} \bigg|\int_{t\leq x} p_{\hat\theta_n,\theta_n}(t)\cdot \tilde\pi_n(\Gamma_n t + \whW_n)\rd t - \int_{t\leq x}  \varphi(t;\theta_n,\Sigma/n)\cdot \tilde\pi^*_n(\Gamma_n t + \whW_n)\rd t\bigg|\geq \ep \right\} =0.
  \end{align*}
\end{lemma}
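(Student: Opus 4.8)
The plan is to prove Lemma~\ref{lemma: first term} by splitting its integrand, via the triangle inequality, into a \emph{density-replacement} piece and a \emph{selection-probability-replacement} piece,
\begin{align*}
&\Big|\int_{t\leq x}\big(p_{\hat\theta_n,\theta_n}(t)-\varphi(t;\theta_n,\Sigma/n)\big)\,\tilde\pi_n(\Gamma_n t+\whW_n)\rd t\Big|\\
&\qquad+\Big|\int_{t\leq x}\varphi(t;\theta_n,\Sigma/n)\,\big(\tilde\pi_n-\tilde\pi^*_n\big)(\Gamma_n t+\whW_n)\rd t\Big|,
\end{align*}
and bounding each uniformly in $x$ and in $\bbF_n\in\calF_n$. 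I work in the scalar case $s=1$ of Theorem~\ref{thm}, so $\int_{t\le x}$ is a half-line integral and $\sup_x$ of each piece is a Kolmogorov-type quantity.

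For the selection piece, I would drop the indicator and use $0\le\tilde\pi_n,\tilde\pi^*_n\le1$ to dominate its $\sup_x$ by $\int\varphi(t;\theta_n,\Sigma/n)\,\big|(\tilde\pi_n-\tilde\pi^*_n)(\Gamma_n t+\whW_n)\big|\rd t$. Writing $\mu_n,\mu^*_n$ for the laws of $\whV_n,\whV^*_n$ (which, by the block-diagonal structure of $R_\infty$, agree with their conditional laws given $\tZ_n,\tZ^*_n$ up to negligible error), both functions are marginalizations of the common kernel $\pi$, namely $\tilde\pi_n(\tZ)=\int\pi(\tZ+v)\rd\mu_n(v)$ and $\tilde\pi^*_n(\tZ)=\int\pi(\tZ+v)\rd\mu^*_n(v)$. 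Assumptions~\ref{assump: pre-selection distribution} and~\ref{assump: bootstrap consistency} place each of $\mu_n,\mu^*_n$ within $W_1$-distance $o(n^{-1/2})$ of $\N(0,\Sigma_V/n)$, so $W_1(\mu_n,\mu^*_n)=o(n^{-1/2})$. After localizing to the bulk $\{|t-\theta_n|\le\rho_n\}$ (with $\rho_n\to0$, $\rho_n\sqrt n\to\infty$), on which $\Gamma_n t+\whW_n$ stays in the $\delta_0$-neighborhood where $\pi$ is $L$-Lipschitz (Assumption~\ref{assump: smooth pi}), Kantorovich--Rubinstein duality gives $|\tilde\pi_n-\tilde\pi^*_n|\le L\,W_1(\mu_n,\mu^*_n)=o(1)$; on the complement the integrand is at most $2\varphi$ and the vanishing normal tail mass makes the contribution negligible.

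For the density piece, I would rescale $u=\sqrt n(t-\theta_n)$: the normal density becomes that of $\N(0,\Sigma)$, and the law of $\sqrt n(\hat\theta_n-\theta_n)$, with density $g_n$, converges to it in $W_1$ by the first marginal of Assumption~\ref{assump: pre-selection distribution}. With $\bar\Delta_n(u)=\int_{-\infty}^u\big(g_n(s)-\varphi(s;0,\Sigma)\big)\rd s$ (vanishing at $\pm\infty$) and $\bar h_n(u)=\tilde\pi_n(\Gamma_n(\theta_n+u/\sqrt n)+\whW_n)\in[0,1]$, integration by parts gives
\begin{align*}
\int_{u\le y}\big(g_n(u)-\varphi(u;0,\Sigma)\big)\bar h_n(u)\rd u=\bar\Delta_n(y)\bar h_n(y)-\int_{-\infty}^{y}\bar\Delta_n(u)\bar h_n'(u)\rd u.
\end{align*}
The integral term is at most $\|\bar h_n'\|_\infty\int|\bar\Delta_n|=\|\bar h_n'\|_\infty\,W_1\big(\sqrt n(\hat\theta_n-\theta_n),\N(0,\Sigma)\big)$, by the univariate identity $W_1=\int|F-G|$; since $\tilde\pi_n$ is a Gaussian convolution of the bounded $\pi$ its gradient is $O(\sqrt n)$, making $\|\bar h_n'\|_\infty=O(\|\Gamma_n\|)$ bounded, so this term is $O(1)\cdot o(1)$.

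The boundary term $\bar\Delta_n(y)\bar h_n(y)$ is the crux and the main obstacle I anticipate: taking $\sup_y$ forces control of the Kolmogorov distance $\sup_y|\bar\Delta_n|$, yet the hypotheses only supply $W_1$ convergence, and $W_1\to0$ does not imply Kolmogorov$\to0$ in general. I would bridge this with the inequality $\sup_u|F(u)-G(u)|\le\sqrt{2\,\|g\|_\infty\,W_1(F,G)}$, valid whenever $G$ has density bounded by $\|g\|_\infty$; applied with $G=\N(0,\Sigma)$ it yields $\sup_y|\bar\Delta_n|\le\sqrt{2\,\|\varphi(\cdot;0,\Sigma)\|_\infty\,W_1(\sqrt n(\hat\theta_n-\theta_n),\N(0,\Sigma))}\to0$ uniformly over $\calF_n$. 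This same device is what lets Assumption~\ref{assump: smooth pi} be only \emph{local}: the normal densities in both pieces concentrate, at rate $n^{-1/2}$, inside the $\delta_0$-neighborhood of $\whZ_n$, so the Lipschitz bounds are invoked only where they are assumed while the tails cost $o(1)$. Combining the two pieces shows each converges to $0$ in $\bbF_n$-probability uniformly over $\calF_n$, which is the claim of Lemma~\ref{lemma: first term}.
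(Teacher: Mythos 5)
Your overall skeleton---compare to a Gaussian idealization, localize to the $\delta_0$-neighborhood where Assumption~\ref{assump: smooth pi} applies, convert $W_1$-closeness of laws into closeness of integrals via Lipschitz test functions, and treat the discontinuous indicator separately---matches the paper's proof. Your device for the indicator is in fact a clean alternative: the paper mollifies $\Indc{t\leq x}$ by a $1/\delta$-Lipschitz ramp $\xi$, pays a term $n^{1/2}\delta/\sqrt{2\pi|\Sigma|}$ for the strip $[x,x+\delta]$, and optimizes $\delta=W_1^{1/2}n^{-1/2}$, which yields the same $W_1^{1/2}$ rate as your inequality $\sup_u|F(u)-G(u)|\leq\sqrt{2\|g\|_\infty W_1(F,G)}$.

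The genuine gap lies in your decomposition. Splitting into a density piece and a selection piece forces you to handle $\tilde\pi_n$ and $\tilde\pi_n^*$ as regular standalone functions, and the two regularity properties you assign them are not available under the stated assumptions. First, in the theory section $\tilde\pi_n(\tZ)=\EE{\pi(\whZ_n)\mid\tZ_n=\tZ}$ is a conditional expectation under the finite-$n$ law, not a marginalization of $\pi$ over the unconditional law of $\whV_n$; your parenthetical claim that the conditional and marginal laws agree ``up to negligible error'' is precisely what $W_1$-convergence of the joint law of $R_n$ cannot deliver, since conditional distributions are not continuous functionals of joint distributions. Hence the pointwise Kantorovich--Rubinstein bound $|\tilde\pi_n-\tilde\pi_n^*|\leq L\,W_1(\mu_n,\mu_n^*)$ is unjustified. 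Second, your integration by parts requires $\bar h_n'$ to exist with $\|\bar h_n'\|_\infty=O(\|\Gamma_n\|)$, which you obtain by declaring $\tilde\pi_n$ a Gaussian convolution of $\pi$; that holds for the idealized $\tilde\pi$ of Section~\ref{sec: condition less}, but the finite-$n$ conditional expectation need not be differentiable at all. The paper avoids both problems by never isolating $\tilde\pi_n$: since conditioning on $\tZ_n=\Gamma_n t+\whW_n$ is conditioning on $\hat\theta_n=t$, the factor $p_{\hat\theta_n,\theta_n}(t)$ times the conditional density of $\whV_n$ recombines into the joint density, so the whole integral becomes the single functional $\EE[R_n]{\pi(\Gamma_n(n^{-1/2}t+\theta_n)+n^{-1/2}v+\whW_n)\Indc{n^{-1/2}t+\theta_n\leq x}}$, which is then compared under $R_n$, $R_\infty$, and $R_n^*$ using only the local Lipschitzness of $\pi$ itself together with Assumptions~\ref{assump: pre-selection distribution} and~\ref{assump: bootstrap consistency}. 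If you fold your two pieces back into joint expectations in this way, your Kolmogorov-from-$W_1$ inequality slots in where the paper's mollifier sits and the rest of your outline goes through.
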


The second term satisfies
\begin{align*}
\lim_{n\goinf}\sup_{\bbF_n\in\calF_n}\bbF_n\left\{\sup_{x\in\R^s}|\int_{t\leq x}  \varphi(t;\theta_n,\Sigma/n)\cdot \tilde\pi^*_n(\Gamma_n t + \whW_n)\rd t - \int_{t\leq x} \varphi(t;\theta_n,\Sigma/n)\cdot \hat\pi_n(\Gamma_n t+\whW_n)\rd t|
\geq \ep\right\} =0
\end{align*}
by Assumption~\ref{assump: hat pi}. This proves Equation~\eqref{equ: prf uniform conv of CDF} and hence the theorem.

\end{proof}

\subsection{Technical lemma}
\begin{lemma}[Adapted from Lemma A.1 of \cite{romano2012uniform}]
  \label{lemma}
  Suppose $\hat\theta_n\sim H_n$ under the distribution $\bbF_n$ and $\whH_n$ is some estimator of $H_n$.
  If for any $\ep>0$, we have 
  \begin{align*}
  &\lim_{n\goinf}\sup_{\bbF_n\in\calF_n}\bbF_n\left\{\sup_{t\in\R}|\whH_n(t;\theta_n) - H_n(t;\theta_n) |\geq \ep\right\}=0 ,
  \end{align*}
  then 
  \begin{align*}
  \liminf_{n\goinf}\inf_{\bbF_n\in\calF_n}\bbF_n\left\{\whH_n^{-1}(\alpha/2)\leq \hat\theta_n\leq \whH_n^{-1}(1-\alpha/2) \right\}\geq1-\alpha.
  \end{align*}
  \end{lemma}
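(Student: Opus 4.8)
The plan is to transfer the coverage statement, which is phrased through the estimated CDF $\whH_n$, onto the \emph{true} CDF $H_n$ by means of the assumed uniform closeness of the two, and then to invoke the probability integral transform for $H_n$. Throughout, I would fix $\ep>0$ and introduce the ``good event''
\[
A_n(\ep)=\Big\{\sup_{t\in\R}|\whH_n(t;\theta_n)-H_n(t;\theta_n)|<\ep\Big\},
\]
whose probability satisfies $\inf_{\bbF_n\in\calF_n}\bbF_n(A_n(\ep))\to 1$ by the hypothesis, equivalently $\sup_{\bbF_n}\bbF_n(A_n(\ep)^c)\to 0$. Since $\whH_n(\cdot;\theta_n)$ is a genuine CDF, the Galois relation $\whH_n^{-1}(\beta)\le x \iff \beta\le \whH_n(x;\theta_n)$ lets me rewrite the event in the conclusion as $\{\alpha/2\le \whH_n(\hat\theta_n;\theta_n)\le 1-\alpha/2\}$, i.e.\ the coverage event $\{\theta_n\in\calI(\hat\theta_n)\}$, up to a boundary set that is $\bbF_n$-null by continuity of $H_n$.

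Next I would use the good event to sandwich this in terms of $H_n$. On $A_n(\ep)$ we have $|\whH_n(\hat\theta_n;\theta_n)-H_n(\hat\theta_n;\theta_n)|<\ep$, which yields the inclusion
\[
\big\{\alpha/2+\ep\le H_n(\hat\theta_n;\theta_n)\le 1-\alpha/2-\ep\big\}\cap A_n(\ep)\subseteq \big\{\alpha/2\le \whH_n(\hat\theta_n;\theta_n)\le 1-\alpha/2\big\},
\]
and therefore
\[
\bbF_n\{\theta_n\in\calI(\hat\theta_n)\}\ge \bbF_n\big\{H_n(\hat\theta_n;\theta_n)\in[\alpha/2+\ep,\,1-\alpha/2-\ep]\big\}-\bbF_n(A_n(\ep)^c).
\]

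It then remains to lower bound the first term uniformly. Here I would invoke the probability integral transform: because $\hat\theta_n$ has the continuous law $H_n$ (its distribution carries no atoms in the settings considered), the random variable $H_n(\hat\theta_n;\theta_n)$ is exactly uniform on $[0,1]$ under $\bbF_n$, so the first term equals $1-\alpha-2\ep$ for every $n$ and every $\bbF_n\in\calF_n$. Taking $\liminf_{n}\inf_{\bbF_n\in\calF_n}$ and using $\sup_{\bbF_n}\bbF_n(A_n(\ep)^c)\to 0$ then gives a coverage lower bound of $1-\alpha-2\ep$; since $\ep>0$ was arbitrary, letting $\ep\downarrow 0$ delivers the claim.

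The step I expect to be most delicate is the probability integral transform. If $H_n$ were allowed atoms, the upper-tail control would break down: the always-valid inequality $\bbF_n\{H_n(\hat\theta_n)<u\}\le u$ handles the lower endpoint, but the matching bound $\bbF_n\{H_n(\hat\theta_n)>v\}\le 1-v$ can fail for discrete laws, so the argument genuinely relies on continuity of $H_n$ (equivalently, on $\hat\theta_n$ being continuously distributed), which holds throughout the applications. A secondary and more routine point is the treatment of the generalized inverse and the endpoint events $\{\hat\theta_n=\whH_n^{-1}(\cdot)\}$; these have $\bbF_n$-probability zero once $H_n$ is continuous, so the strict versus non-strict distinctions used above are immaterial.
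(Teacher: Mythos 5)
Your argument is correct and is, at its core, the same as the paper's: fix the good event $A_n(\ep)$ on which $\whH_n$ and $H_n$ are uniformly $\ep$-close, transfer the coverage event from $\whH_n$ to $H_n$ on that event, evaluate the resulting $H_n$-probability exactly, and let $\ep\downarrow 0$. The only substantive difference is in how the last step is executed. You work with the two-sided event at once and invoke the probability integral transform, so you need $H_n(\hat\theta_n)$ to be exactly uniform, i.e.\ $H_n$ continuous; as you correctly note, the upper-tail bound $\bbF_n\{H_n(\hat\theta_n)>v\}\le 1-v$ can fail for discrete laws. The paper instead treats the two tails separately and uses the one-sided generalized-quantile inequalities $\bbF_n\{\hat\theta_n\le H_n^{-1}(q)\}\ge q$ and $\bbF_n\{\hat\theta_n\ge H_n^{-1}(q)\}\ge 1-q$, together with the monotone comparison $H_n^{-1}(q-\ep)\le \whH_n^{-1}(q)$ valid on $A_n(\ep)$; these hold for arbitrary (possibly atomic) $H_n$, so the lemma is proved in the generality in which it is stated, at the cost of a Bonferroni combination of the two tails at the end. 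Your route buys a slightly cleaner one-shot computation ($1-\alpha-2\ep$ exactly) but imposes a continuity hypothesis that the lemma does not state; in the paper's applications $H_n$ is a (truncated) Gaussian-type law, so the restriction is harmless there, but if you want to prove the lemma as written you should replace the PIT step with the one-sided quantile bounds.
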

  \begin{proof}[Proof of Lemma \ref{lemma}]
  Note that
  \begin{align*}
  \bbF_n\left\{\hat\theta_n\leq \whH_n^{-1}(1-\alpha/2)\right\}&\geq \bbF_n\left\{\hat\theta_n\leq \whH_n^{-1}(1-\alpha/2)\text{ and }\sup_{t\in\R}|H_n(t)-\whH_n(t) |\leq\ep\right\}\\
  &\geq \bbF_n\left\{\hat\theta_n\leq H_n^{-1}(1-\alpha/2-\ep)\text{ and }\sup_{t\in\R}|H_n(t)-\whH_n(t) |\leq\ep\right\}\\
  &\geq 1-\alpha/2-\ep-\bbF_n\left\{\sup_{t\in\R}|H_n(t)-\whH_n(t) |\geq\ep \right\}.
  \end{align*}
  Taking infimum over $\bbF_n\in\calF_n$ on both sides and sending $n$ to $\infty$, we get
  \begin{align*}
  \liminf_{n\goinf}\inf_{\bbF_n\in\calF_n}\bbF_n\left\{\hat\theta_n\leq \whH_n^{-1}(1-\alpha/2)\right\}\geq 1-\alpha/2-\ep.
  \end{align*}
  Because this holds for any $\ep>0$, 
  \begin{align*}
  \liminf_{n\goinf}\inf_{\bbF_n\in\calF_n}\bbF_n\left\{\hat\theta_n\leq \whH_n^{-1}(1-\alpha/2)\right\}\geq 1-\alpha/2.
  \end{align*}
  By a similar argument, we have
  \begin{align*}
  \liminf_{n\goinf}\inf_{\bbF_n\in\calF_n}\bbF_n\left\{\hat\theta_n\geq \whH_n^{-1}(\alpha/2)\right\}\geq 1-\alpha/2.
  \end{align*}
  This concludes the proof.
  \end{proof}

\subsection{Proof of Lemma~\ref{lemma: first term}}
\label{prf: first term}
\begin{proof}
  Note that
  \begin{align*}
    J_n(x)&:=\int_{t\leq x} p_{\hat\theta_n,\theta_n}(t)\cdot \tilde\pi_n(\Gamma_n t + \whW_n)\rd t\\
    &=\int_{t\leq x} \int_{\R^d} \pi(\Gamma_n t+v+\whW_n)  p_{\hat\theta_n,\theta_n}(t) p_{\hat V|\tZ=\Gamma_n t+\whW_n }(v) \rd v\rd t\\
    &=\int_{\R^d\times \R}p_{\hat\theta_n, \hat V_n}(t,v) \pi(\Gamma_n t+v+\whW_n)\Indc{t\leq x} \rd v\rd t\\
    &=\int_{\R^d\times \R}p_{\sqrt n(\hat\theta_n-\theta_n), \sqrt n\hat V_n}(t,v) \pi(\Gamma_n (n^{-1/2}t+\theta_n )+n^{-1/2}v+\whW_n) \Indc{n^{-1/2}t+\theta_n\leq x} \rd v\rd t\\
    &=\EE[R_n]{\pi(\Gamma_n (n^{-1/2}t+\theta_n )+n^{-1/2}v+\whW_n) \Indc{n^{-1/2}t+\theta_n\leq x} }.
  \end{align*}
  Define
  \begin{align*}
    \underline{J}_n(x)&:=\int_{t\leq x} \varphi(t;\theta_n, \Sigma/n)\cdot \varphi(v;0,\Sigma_V/n)\cdot \pi(\Gamma_n t+v+\whW_n)\rd v\rd t\\
    &=\int_{\R^d\times \R} \varphi(t;0,\Sigma)\varphi(v;0,\Sigma_V) \pi(\Gamma_n (n^{-1/2} t+\theta_n) )+n^{-1/2}v+\whW_n)\Indc{n^{-1/2}t+\theta_n\leq x } \rd v\rd t\\
    &=\EE[R_\infty]{\pi(\Gamma_n (n^{-1/2}t+\theta_n )+n^{-1/2}v+\whW_n) \Indc{n^{-1/2}t+\theta_n\leq x}}.
  \end{align*}
  By Assumption~\ref{assump: smooth pi}, $\pi$ is Lipschitz inside a neighborhood of $\whZ_n$. By definition, $\whZ_n=\Gamma_n\hat\theta_n+\whV_n+\whW_n$. Since $\hat\theta_n=\theta_n+o_p(1)$, $\whV_n=o_p(1)$, $\pi$ is $L$-Lipschitz inside the region $\{ Z\in\R^d: \|Z-(\Gamma_n\theta_n+\whW_n)\|_\infty\leq\delta_0/2\}$ with probability going to 1. 
  Let $\calA_n=\{u\in\R^{d+1}: \|u\|_\infty\leq\sqrt n\delta_0/2 \}$. Then $\EE[R_n]{\Indc{\calA_n^c}}\to0$ and $\EE[R_\infty]{\calA_n^c}\to0$. Since $\pi$ is a bounded function, the integral of $\pi$ over the region $\calA_n^c$ goes to 0 under $R_n$ and $R_\infty$. So our focus is on $(t,v)\in \calA_n$, where $\pi(\Gamma_n (n^{-1/2}t + \theta_n ) +n^{-1/2} v + \whW_n  )$ is Lipschitz in $(t,v)$ with Lipschitz constant $n^{-1/2}L(1 + \|\Gamma_n\|_2)$.
  Let $x'=\min\{x, \sqrt n\delta_0 \}$.
  Fix some $\delta$ and define 
  \begin{align*}
    \xi(t)=\begin{cases}
      1 & t\leq x'\\
      \frac{1}{\delta}(x'+\delta - t ) & x'\leq t\leq x'+\delta\\
      0 & t\geq x'+\delta
    \end{cases}
  \end{align*}
  So $\xi$ is $1/\delta$-Lipschitz continuous and provides an upper bound of the indicator: $1\geq \xi(t)\geq \Indc{t\leq x'}$. Since $\pi$ is $n^{-1/2}L(1+\|\Gamma_n\|_2)$-Lipschitz continuous when $(t,v)\in\calA_n$ and $0\leq \pi\leq 1$, we have $\pi(\Gamma_n (n^{-1/2}t+\theta_n)+n^{-1/2}v+\whW_n)\cdot \xi(n^{-1/2}t+\theta_n)$ is $(n^{-1/2}L(1+\|\Gamma_n\|_2)+n^{-1/2}/\delta)$-Lipschitz continuous in $(t,v)$. Let $\bar L=n^{-1/2} L(\|\Gamma_n\|_2+1)+n^{-1/2}/\delta$. 
  Then by the definition of Wasserstein 1 distance and by the Lipschitzness of $\pi_{\calA}:=\pi\cdot\Indc{\calA_n}$, we have
  \begin{align*}
    J_n(x) - \underline{J}_n(x)&\leq \EE[R_n]{\pi_{\calA}(\Gamma_n (n^{-1/2}t+\theta_n )+n^{-1/2}v+W_n) \xi(n^{-1/2}t+\theta_n)}\\
    &-\EE[R_\infty]{\pi_{\calA}(\Gamma_n (n^{-1/2}t+\theta_n )+n^{-1/2}v+W_n) \xi(n^{-1/2}t+\theta_n)}\\
    &+\EE[R_\infty]{\pi_{\calA}(\Gamma_n (n^{-1/2}t+\theta_n )+n^{-1/2}v+W_n) \xi(n^{-1/2}t+\theta_n)}\\
    &-\EE[R_\infty]{\pi_{\calA}(\Gamma_n (n^{-1/2}t+\theta_n )+n^{-1/2}v+W_n) \Indc{n^{-1/2}t+\theta_n\leq x}}+o_p(1)\\
    &\leq \bar L \cdot W_1(R_n,R_\infty ) + \int_{\R^s} \Indc{x\leq n^{-1/2}t+\theta_n\leq x+\delta } \varphi(t;0,\Sigma)\rd t+o_p(1)\\
    &\leq \bar L\cdot W_1(R_n,R_\infty) + n^{1/2}\delta/ \sqrt{2\pi|\Sigma|}+o_p(1) \\
    &=n^{-1/2}L(\|\Gamma_n\|_2+1) W_1(R_n,R_\infty) + n^{-1/2}W_1(R_n,R_\infty)/\delta + n^{1/2}\delta/\sqrt{2\pi|\Sigma|}  + o_p(1).
  \end{align*}
  Choose $\delta=W_1(R_n,R_\infty)^{1/2} n^{-1/2}$. Then the above display is upper bounded by
  \begin{align*}
    n^{-1/2}L(\|\Gamma_n\|_2+1)W_1(R_n,R_\infty) + (1+1/\sqrt{2\pi|\Sigma|}) W_1(R_n,R_\infty)^{1/2} +o_p(1).
  \end{align*}
  For the same reason we can prove the other direction of the inequality and obtain
  \begin{align*}
    |J_n(x)-\underline{J}_n(x)|\leq n^{-1/2}L(\|\Gamma_n\|_2+1)W_1(R_n,R_\infty) + (1+1/\sqrt{2\pi|\Sigma|})W_1(R_n,R_\infty)^{1/2} +o_p(1).
  \end{align*}
  By assumption that $\lim_{n\goinf}\sup_{\bbF_n\in\calF_n} W_1(R_n,R_\infty)=0$, we have shown that
  \begin{align*}
    \lim_{n\goinf}\sup_{\bbF_n\in\calF_n}\bbF_n\{\sup_{x\in\R}|J_n(x)-\underline{J}_n(x)|\geq\ep \}=0.
  \end{align*}
  
  By the same argument using the assumption on the convergence of $W_1(R_n^*,R_\infty)$ for the bootstrap distribution, we have
  \begin{align*}
    \lim_{n\goinf}\sup_{\bbF_n\in\calF_n}\bbF_n\{\sup_{x\in\R}|J^*_n(x)-\underline{J}_n(x)|\geq\ep \}=0,
  \end{align*}
  where $J_n^*$ is similarly defined as $J_n$ with $\tilde\pi_n$ replaced by $\tilde\pi_n^*$. This concludes the proof of the lemma.
    
\end{proof}

\end{document}